\newtheorem {thm}{Theorem}[section]
\newtheorem {lem}[thm]{Lemma}
\newtheorem {cor}[thm]{Corollary}
\theoremstyle{defintion}
\theoremstyle{remark}
\theoremstyle{example}
\theoremstyle{assumption}
\def\E{{\mathbb E~}}
\def\P{{\mathbb P~}}
\def\R{{\mathbb R}}
\def\SS{{\mathbb S}}
\def\lbl{\label}
\def\be{\begin{equation}}
\def\ee{\end{equation}}
\def\p{\partial}
\def\deg{\operatorname{deg}}
\def\t{\mathsf{T}}
\def\1{\mathbf{1}}
 \newcommand{\iu}{{i\mkern1mu}}
\title{The Kuramoto model on power law graphs}
\author{ Georgi S. Medvedev\thanks{
 Department of Mathematics, Drexel University, 3141 Chestnut Street,
 Philadelphia, PA 19104;
 {\tt medvedev@drexel.edu}
 }\quad
 and \quad Xuezhi Tang\thanks{ Wells Fargo Securities, 505 S. Tryon street, Charlotte, NC 28202;
{\tt xuezhi.tang@wellsfargo.com}
} 
}
\begin{document}
\maketitle

\begin{abstract}
The Kuramoto model (KM) of coupled phase oscillators on scale free graphs is analyzed in this work. 
The W-random graph model is used to define a convergent family of sparse graphs with power 
law degree distribution. For the KM on this family of graphs, we derive the mean field description
of the system's dynamics in the limit as the size of the network tends to infinity. The mean field 
equation is used to study two problems: synchronization in the coupled system with randomly
 distributed 
intrinsic frequencies and existence and bifurcations of chimera states in the KM with repulsive
coupling. The analysis of both problems highlights the role of the scale free network 
organization 
in shaping dynamics of the coupled system. The analytical results are complemented
 with the 
results of numerical simulations.
\end{abstract}

\section{Introduction}
\setcounter{equation}{0}

Coupled dynamical systems on graphs serve as mathematical models of various 
technological physical, biological, social, and economic networks \cite{PG16}. Examples include
neuronal and genetic networks 
and models of flocking in life sciences \cite{MotTad14}; power and information networks
and consensus protocols in technology \cite{Med12}; and economic and social 
networks and models of opinion dynamics in social sciences \cite{PG16}. This list can be
continued. Numerical simulations and mathematical analysis of coupled systems provided
many important insights into the mechanisms underlying collective dynamics in complex
networks. In the last two decades, there have been a remarkable progress in understanding 
classical phenomena
such as synchronization and phase locking in complex networks 
\cite{ChiNis2011, Str00, WSG06, MedTan15a, MedWri17}, and the discoveries of
new effects in the dynamics of networks such as chimera states 
\cite{KurBat02, AbrStr06, Ome13}.
The research on dynamical networks has been fueled by the desire for
better understanding the link between the structure of a network and its dynamics. 
This is the main motivation of our work.

Real world networks feature a rich variety of connectivity patterns. Scale free 
networks have been singled out in the network science community for their nontrivial structure
and compelling applications. The latter include the world wide web and scientific citation  
network among other physical, biological, and social networks \cite{BAR99}. 
Scale free graphs are 
characterized
by power law asymptotics of the degree distribution. For this reason, they are also called
power law graphs. In practice, power law distribution is determined by statistical methods. 
Different combinatorial  algorithms such as the preferential attachment  (see, e.g., \cite{BCLV11}) 
and Chung-Lu \cite{ChuLu02} methods are used
to generate computational models of scale free graphs.
 Dynamical systems on the graphs generated by 
these methods are difficult to study analytically.
Consequently, there are few mathematical results on the dynamics of coupled 
systems on scale free graphs. The goal of this paper is to rectify this situation.
We introduce  a framework for
modeling and analyzing coupled systems on scale free graphs. 
Further, we illustrate the role 
that the scale free connectivity can play in shaping dynamics of coupled systems. 
To this end, 
we analyze two problems:  synchronization in attractively coupled KM on power 
law graphs with
randomly distributed intrinsic frequencies 
and chimera states in repulsively coupled KM. 

The KM of coupled phase oscillators is one of the most successful mathematical models for studying
collective dynamics and synchronization \cite{Kur75}. It captures the essential features of dynamics
of weakly coupled limit cycle oscillators \cite{HopIzh97} and has many interesting applications 
in physical and biological sciences \cite{Str00}. In the synchronization problem, the intrinsic frequencies of the 
individual oscillators are assumed to be taken from a probability distribution with density function 
$g$. Then one
wants to find a critical value of the coupling strength, which marks the transition from stochastic 
distribution of the phases of oscillators to synchronization. For the classical KM on complete graphs,
Kuramoto found the critical value $K_c=2\pi/g(0).$ Kuramoto's self-consistent 
analysis recently  received a rigorous mathematical
justification in  \cite{ChiNis2011, Chi15a}. In the present paper, we use the results in 
\cite{ChiMed16}, which
extend some of the techniques from \cite{ChiNis2011} to the KM on graphs. 
Using these techniques, we obtain an explicit 
formula for the onset of synchronization for scale free graphs.

The second problem considered in this paper deals  with chimera states. These are special spatio-temporal
patterns in dynamical networks,
which feature coexisting regions of synchronous (regular)
and stochastic behaviors. Chimera states were discovered by Kuramoto and Battogtokh
in the KM with random initial conditions  \cite{KurBat02, AbrStr06}.  Since then they have been studied
in different settings. Importantly, chimera states have 
been confirmed experimentally (see, e.g., \cite{NTS16}). 
In the last decade, chimera states have become a subject of intense research \cite{PanAbr15}.
In this paper, we  present a new simple mechanism for chimera states in the KM with 
repulsive coupling. Our mechanism exploits the scale free structure of the network and admits 
a simple and explicit analytical description.

The organization of the paper is as follows. In the next section, we explain the W-random graph
model of power law graphs, which will be used in this work. 
There we will also formulate the KM on the power law graphs following
\cite{KVMed17}. In Section~\ref{sec.mean},
we review the mathematical background of the mean field equation for the KM on graphs 
following \cite{ChiMed16}. Although, the results in \cite{ChiMed16} do not apply to the model
on the power law graphs due to the singularity in this graph model, they can be used for the 
truncated power law model. The latter captures all essential features of the original model
Section~\ref{sec.sync} deals with the synchronization problem for the KM on power law graphs,
and Section~\ref{sec.repulsive} -- with the chimera states in the repulsively coupled model.
We conclude with brief discussion in Section~\ref{sec.discuss}.

\section{The model}
\lbl{sec.model}
\setcounter{equation}{0}

W-random graphs provide a convenient framework for deriving the continuum limit of the KM
on convergent families of graphs \cite{Med14b, Med14c, KVMed17, ChiMed16}.
In this section, we explain the W-random graph model adapted from \cite{BorCha14}
that will be used below.

Let $W(x,y)=(xy)^{-\alpha}$ for some $\alpha\in (0,1)$,
\be \lbl{Xn}
X_n = \{ x_{n0}, x_{n1},x_{n2},\dots,x_{nn}\}, \; x_{ni}=i/n,\; i=0,1,\dots, n,
\ee
and $\rho_n=n^{-\beta},\; \alpha<\beta<1.$ 

$\Gamma_n=G(W,\rho_n,X_n)$ stands for a random graph with the node
set $V(\Gamma_n)=[n]$ and the edge set $E(\Gamma_n)$ 
defined as follows.
The probability that $\{i, j\}$ forms an edge is
\be\lbl{Pedge}
\P(\{i,j\} \in E(\Gamma_n))=\rho_n \bar W_n(x_{ni},x_{nj}), \; i,j\in [n],
\ee
where
\be
\lbl{Wn}
\bar W_n(x,y)= \rho_n^{-1}\wedge W(x,y).\footnotemark
\ee
\footnotetext{Throughout this paper,
  we use $a\wedge b$ and $a\vee b$ to denote $\min\{a,b\}$ and
  $\max\{a,b\}$ respectively.}
The decision whether a given pair of nodes is included in the edge set
is made independently from other pairs. In other words,
$G(W,\rho_n,X_n)$ 
is a product probability
space
\be\lbl{pspace}
(\Omega_n= \{0,1\}^{n(n+1)/2}, 2^{\Omega_n},\P).
\ee
By $\Gamma_n(\mathbf{\omega}), \omega\in \Omega_n$, we will denote
a random graph drawn from the probability distribution $G(W,\rho_n,X_n)$.

\begin{figure}
\begin{center}
\includegraphics[height=2.0in, width=2.75in]{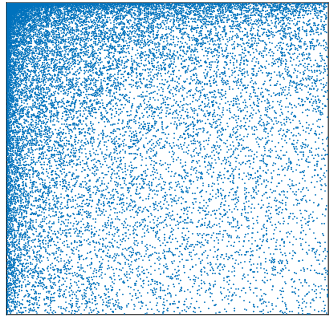}
\end{center}
\caption{The pixel plot of the adjacency matrix of a power law graph.
}
\lbl{f.-1}
\end{figure}

\begin{lem}\lbl{lem.pl}\cite{KVMed17} 
$\Gamma_n=G(W,\rho_n,X_n)$ has the following properties:
\begin{description}
\item[A)] The expected degree of node $i\in [n]$ of $\Gamma_n$ is\footnote{Here and below,
$\E_\omega$ denotes the mathematical expectation with respect to the probability space
\eqref{pspace}
underlying the random graph model.}
\be\lbl{Edeg}
\E_\omega\deg_{\Gamma_n}(i)=(1-\alpha)^{-1}n^{1+\alpha-\beta} i^{-\alpha}(1+o(1)).
\ee
\item[B)] The expected edge density of $\Gamma_n$ is $(1-\alpha)^{-2}n^{-\beta}(1+o(1)).$
\end{description}
\end{lem}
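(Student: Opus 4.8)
The plan is to compute both expectations directly from the Bernoulli edge probabilities \eqref{Pedge}. By linearity of expectation, and writing $x_{ni}=i/n$, the sum over $j\neq i$ differs from the full sum $\sum_{j=1}^n$ only by the single term $\rho_n\bar W_n(x_{ni},x_{ni})\le 1$, so
\[
\E_\omega\deg_{\Gamma_n}(i)=\sum_{j=1}^n\rho_n\bar W_n(x_{ni},x_{nj})+O(1)
=\sum_{j=1}^n\rho_n\Big(\rho_n^{-1}\wedge (ij/n^2)^{-\alpha}\Big)+O(1),
\]
and the first task is to estimate this sum.

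For part \textbf{A)} I would replace the sum by $n\rho_n\int_0^1\bar W_n(i/n,y)\,dy$. The key point is that $y\mapsto\bar W_n(i/n,y)=\rho_n^{-1}\wedge(i/n)^{-\alpha}y^{-\alpha}$ is nonincreasing on $(0,1]$ and bounded above by $\rho_n^{-1}$, so $\sum_{j=1}^n\bar W_n(i/n,j/n)$ and $n\int_0^1\bar W_n(i/n,y)\,dy$ differ by at most $\rho_n^{-1}$; multiplying by $\rho_n$ gives $\E_\omega\deg_{\Gamma_n}(i)=n\rho_n\int_0^1\bar W_n(i/n,y)\,dy+O(1)$ with the $O(1)$ uniform in $i$. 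Next I would evaluate the integral exactly. Since $\alpha<\beta<1$, for every $i\in[n]$ and all large $n$ the crossover $y_\ast:=\rho_n^{1/\alpha}(i/n)^{-1}$ lies in $(0,1)$; on $[0,y_\ast]$ the integrand is the constant $\rho_n^{-1}$, on $[y_\ast,1]$ it is $(i/n)^{-\alpha}y^{-\alpha}$, and a short computation yields
\[
n\rho_n\int_0^1\bar W_n(i/n,y)\,dy=\frac{n^{1+\alpha-\beta}i^{-\alpha}}{1-\alpha}-\frac{\alpha}{1-\alpha}\,n^{2-\beta/\alpha}i^{-1}.
\]
It then remains to show that the second term and the $O(1)$ error are $o(1)$ relative to the first. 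The ratio of the second term to the first is $\alpha\, n^{(1-\alpha)(\alpha-\beta)/\alpha}\,i^{\alpha-1}$, whose exponent of $n$ is negative precisely because $\alpha<\beta$, while $i^{\alpha-1}\le1$; and the first term is at least $(1-\alpha)^{-1}n^{1-\beta}\to\infty$ (using $\beta<1$), which absorbs the $O(1)$. Hence $\E_\omega\deg_{\Gamma_n}(i)=(1-\alpha)^{-1}n^{1+\alpha-\beta}i^{-\alpha}(1+o(1))$ with $o(1)$ uniform in $i\in[n]$.

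For part \textbf{B)} I would use that the expected number of edges is $\tfrac12\sum_{i\in[n]}\E_\omega\deg_{\Gamma_n}(i)$. Inserting the uniform estimate from A) together with $\sum_{i=1}^n i^{-\alpha}=(1-\alpha)^{-1}n^{1-\alpha}(1+o(1))$ gives
\[
\sum_{i\in[n]}\E_\omega\deg_{\Gamma_n}(i)=\frac{n^{1+\alpha-\beta}}{1-\alpha}\sum_{i=1}^n i^{-\alpha}\,(1+o(1))=\frac{n^{2-\beta}}{(1-\alpha)^2}(1+o(1)),
\]
so the expected edge density (expected number of edges divided by $\binom{n}{2}$) equals $(1-\alpha)^{-2}n^{-\beta}(1+o(1))$. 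Alternatively one can redo the argument of A) with the double integral $n^2\rho_n\int_0^1\int_0^1\bar W_n(x,y)\,dx\,dy$, which tends to $(1-\alpha)^{-2}$, the region $x\le\rho_n^{1/\alpha}$ (where the cap is active on all of $[0,1]$) contributing only $\rho_n^{1/\alpha-1}+O(\rho_n^{1/\alpha}|\ln\rho_n|)=o(1)$.

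The only genuinely delicate point is keeping every error uniform in $i\in[n]$ while tracking how the truncation $\bar W_n=\rho_n^{-1}\wedge W$ interacts with the singularity of $W$ at the origin. The hypotheses do exactly this bookkeeping: $\alpha<1$ makes $\int_0^1 x^{-\alpha}\,dx$ finite so the main term is well defined, $\beta<1$ forces it to diverge so relative errors vanish, and $\alpha<\beta$ guarantees that the ``capped'' contribution $n^{2-\beta/\alpha}i^{-1}$ is of strictly lower order than the main term $n^{1+\alpha-\beta}i^{-\alpha}$.
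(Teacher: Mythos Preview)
The paper does not include its own proof of this lemma; it simply cites \cite{KVMed17}. So there is nothing in the present paper to compare against.

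That said, your argument is correct and is the natural direct computation. The monotone Riemann-sum bound $|\sum_{j=1}^n \bar W_n(i/n,j/n)-n\int_0^1\bar W_n(i/n,y)\,dy|\le \rho_n^{-1}$ is valid because the integrand is nonincreasing and capped at $\rho_n^{-1}$; your explicit evaluation of the integral is right (the exponent $(1-\alpha)(\alpha-\beta)/\alpha<0$ uses exactly $\alpha<\beta$); and the uniformity in $i$ of the $o(1)$, needed for part B), follows since the main term is at least $(1-\alpha)^{-1}n^{1-\beta}\to\infty$. One small cosmetic remark: you might note explicitly that $y_\ast=n^{1-\beta/\alpha}/i\le n^{1-\beta/\alpha}\to 0$, so $y_\ast\in(0,1)$ for all $i\in[n]$ once $n$ is large, which you do say but without the one-line justification.
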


Let $\Gamma_n=\Gamma_n(\omega), \omega\in\Omega_n,$ be a random graph model
taken from  the probability distribution $G(W,\rho_n,X_n)$. The KM on $\Gamma_n$ is 
defined as follows:
\be\lbl{KM}
\dot u_{ni}= \phi_i+
{K\over n\rho_n} \sum_{ j=1}^n \xi_{nij}(\omega) \sin(u_{nj}-u_{ni}),\quad i\in[n],
\ee
where $\phi_i$ is the intrinsic frequency of the oscillator~$i$, 
$\xi_{nij}(\omega)=\1_{E(\Gamma_n(\omega))} (\{i,j\}) $
is a Bernoulli random variable, which takes value $1$ when $\{i,j\}$ is an edge of $\Gamma_n$.
$K$ controls the strength of coupling. The scaling factor $\rho_n$ on the right hand side of
\eqref{KM} is used to guarantee that the \eqref{KM} has a nontrivial continuum limit as $n\to\infty$.
In the classical KM on complete graph $\rho_n=1$. The same holds for the KM on any dense graphs
(i.e., the graph for which the edge density remains $O(1)$ for $n\gg 1$). However, for sparse 
graphs
like the power law graph $\Gamma_n=G( W, \rho_n, X_n)$, the edge density vanishes as
$n\to\infty$ (cf.~Lemma~\ref{lem.pl}). Thus, the scaling of the right hand side of \eqref{KM} 
by $\rho_n$ is needed, if one wants to have a nondegenerate continuum limit as $n\to\infty$. 

The KM \eqref{KM} is derived from a system of weakly coupled oscillators \cite{Kur75, HopIzh97}. 
The dynamical variable
$u_{ni}$ stands for the phase of oscillator $i$ and $\phi_i$ is its intrinsic frequency.
Until Section~\ref{sec.repulsive}, we assume that the intrinsic frequencies are drawn form a 
continuous probability
distribution with density $g$. 
The right hand side of \eqref{KM} depends on the realization of the random graph model 
$\Gamma_n=G(W, \rho_n, X_n)$. Thus, we are dealing with a system of 
ordinary differential equations with random coefficients. As a first step in analyzing \eqref{KM},
we substitute \eqref{KM} by the averaged model, which approximates the random KM \eqref{KM}.
Specifically,  we average the 
right-hand side of \eqref{KM} over all possible realizations of $\Gamma_n$:
\be\lbl{aKM}
 \dot v_{ni}(t)= F_{ni}(v_n), \; v_n(t)=(v_{n1}(t),v_{n2}(t),\dots, v_{nn}(t)),
\ee
where
\begin{equation*}
\begin{split}
F_{ni} (v) &= \E_\omega \left\{ \phi_i +K(n\rho_n)^{-1}
\sum_{j=1}^n \xi_{nij}(\omega) \sin(u_{nj}-u_{ni})\right\}\\
&=\phi_i +K(n\rho_n)^{-1}\sum_{j=1}^n \E_\omega\left(\xi_{nij}(\omega)\right) \sin(u_{nj}-u_{ni})\\
&=\phi_i +Kn^{-1}\sum_{j=1}^n \bar W_{nij} \sin(u_{nj}-u_{ni}),
\end{split}
\end{equation*}
where we used \eqref{Pedge}. Recall that $\E_\omega$ stands for the mathematical expectation
with respect to the probability space \eqref{pspace}.

Thus, the averaged  KM has the following form
\be\lbl{AKM}
\dot v_{ni}=\phi_i +Kn^{-1}\sum_{j=1}^n \bar W_{nij} \sin(v_{nj}-v_{ni}).
\ee
It is shown in \cite{ChiMed16} (see also \cite{KVMed17}) that for $n\gg 1,$ 
solutions of the initial value problems (IVPs)
 for the original and averaged
KMs subject to the same initial conditions remain close on finite time intervals with 
probability $1$. 
\begin{lem}\lbl{lem.average}\cite{ChiMed16}
\be\lbl{in-prob}
\lim_{n\to\infty} \max_{t\in [0,T]}
\left\|v_n(t)-u_n(t)\right\|_{2,n} =0 \quad \mbox{almost surely}.
\ee
\end{lem}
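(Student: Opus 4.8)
The plan is to run the Gronwall comparison of \cite{ChiMed16}: subtract the two systems, isolate the randomness in a single term, and control that term by a concentration inequality. Write $w_n(t)=u_n(t)-v_n(t)$; since \eqref{KM} and \eqref{AKM} are started from the same data, $w_n(0)=0$. Subtracting \eqref{AKM} from \eqref{KM} and adding and subtracting $\frac{K}{n\rho_n}\sum_{j=1}^n\xi_{nij}(\omega)\sin(v_{nj}-v_{ni})$ gives
\be\lbl{wdot}
\dot w_{ni}=\frac{K}{n\rho_n}\sum_{j=1}^n\xi_{nij}(\omega)\bigl(\sin(u_{nj}-u_{ni})-\sin(v_{nj}-v_{ni})\bigr)+D_{ni}(t),
\ee
\be\lbl{Dni}
D_{ni}(t):=\frac{K}{n\rho_n}\sum_{j=1}^n\bigl(\xi_{nij}(\omega)-\rho_n\bar W_{nij}\bigr)\sin\bigl(v_{nj}(t)-v_{ni}(t)\bigr).
\ee
By $|\sin a-\sin b|\le|a-b|$, $|(u_{nj}-u_{ni})-(v_{nj}-v_{ni})|\le|w_{ni}|+|w_{nj}|$, the triangle inequality and Cauchy--Schwarz, the first sum on the right of \eqref{wdot} is bounded in the $\|\cdot\|_{2,n}$ norm by $c_n\|w_n(t)\|_{2,n}$ with $c_n=2K(n\rho_n)^{-1}\max_{i\in[n]}\deg_{\Gamma_n}(i)$. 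Since $w_n(0)=0$, integrating \eqref{wdot} and applying Gronwall's inequality on $[0,T]$ yields
\be\lbl{gron}
\max_{t\in[0,T]}\|w_n(t)\|_{2,n}\le Te^{c_nT}\,\max_{t\in[0,T]}\|D_n(t)\|_{2,n},
\ee
so it is enough to show that $c_n=O(1)$ and that the right-hand side of \eqref{gron} vanishes almost surely.

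The reason for writing the fluctuation in the form \eqref{Dni} is that the sine there is evaluated along the \emph{deterministic} averaged trajectory $v_n$. Thus, for fixed $t$ and $i$, $D_{ni}(t)$ is a sum over $j$ of independent, centered, uniformly bounded (by $K(n\rho_n)^{-1}$) random variables with deterministic coefficients, so $\mathrm{Var}\,D_{ni}(t)\le K^2(n\rho_n)^{-2}\sum_{j}\rho_n\bar W_{nij}=K^2(n\rho_n)^{-2}\E_\omega\deg_{\Gamma_n}(i)$. Averaging over the nodes and using Lemma~\ref{lem.pl},
\be\lbl{vardec}
\E_\omega\|D_n(t)\|_{2,n}^2=\frac1n\sum_{i=1}^n\mathrm{Var}\,D_{ni}(t)\le\frac{K^2}{n^3\rho_n^2}\sum_{i=1}^n\E_\omega\deg_{\Gamma_n}(i)=O\bigl(n^{\beta-1}\bigr)\longrightarrow0,
\ee
because $\rho_n=n^{-\beta}$ and $\beta<1$. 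To upgrade \eqref{vardec} to $\max_{t\in[0,T]}\|D_n(t)\|_{2,n}\to0$ almost surely, one replaces the second moment by a higher even moment of $D_{ni}(t)$ (or by a Bernstein/Hoeffding exponential tail bound), discretizes $[0,T]$ on a mesh fine enough that $t\mapsto D_n(t)$ changes little between consecutive grid points (which uses that $v_n$ solves an ODE), takes a union bound over the grid, and invokes Borel--Cantelli.

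The main obstacle is the singularity of $W$ at the origin. Because the cap $\rho_n^{-1}$ grows with $n$, the rescaled expected degrees $(n\rho_n)^{-1}\E_\omega\deg_{\Gamma_n}(i)=(1-\alpha)^{-1}n^{\alpha}i^{-\alpha}(1+o(1))$ — and with them the Gronwall constant $c_n$ and the individual variances in \eqref{vardec} — are \emph{not} bounded uniformly in $n$; they blow up for the highest-degree nodes, i.e.\ for small $i$. What survives is the \emph{node-averaged} bound \eqref{vardec}, which is precisely why the conclusion is formulated in the $\|\cdot\|_{2,n}$ norm. In \cite{ChiMed16} this difficulty is handled by first establishing \eqref{in-prob} for the truncated kernel $r\wedge W$ with a fixed cutoff $r$, for which $c_n=O(1)$ and the estimates \eqref{gron}--\eqref{vardec} go through unchanged, and then transferring the bound to the singular model; that reduction is the step that takes real work, the remainder being the routine Gronwall-plus-concentration scheme sketched above.
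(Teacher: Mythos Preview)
The paper does not prove Lemma~\ref{lem.average}: it is simply quoted as a result from \cite{ChiMed16} (with a pointer to \cite{KVMed17}), so there is no argument in the paper to compare your proposal against. Your sketch of the Gronwall--plus--concentration scheme is the standard one and you are candid about its limitations; in particular you correctly identify that the constant $c_n$ in \eqref{gron} is \emph{not} $O(1)$ for the singular kernel, so the estimate \eqref{gron} by itself does not close. The paper is in fact aware of this tension: in Section~\ref{sec.mean} it remarks explicitly that the analysis of \cite{ChiMed16} ``uses Lipschitz continuity of $W$ and does not apply to the singular kernel $W(x,y)=(xy)^{-\alpha}$'' and works instead with the truncated kernel $W_C=C\wedge W$ for fixed $C$. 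So the honest status of Lemma~\ref{lem.average} in this paper is that it is imported from \cite{ChiMed16} for bounded $W$, and its extension to the genuinely singular setting is not carried out here; your last paragraph accurately reflects that the passage from the truncated to the singular model is where the real work lies, and that step is neither done in your sketch nor in the present paper.
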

Here, 
\be\lbl{2n-norm}
\| v_n\|_{2,n}:=\sqrt{ n^{-1} \sum_{i=1}^n v_{ni}^2}
\ee
is the discrete $L^2$-norm.

\section{The mean field limit} \lbl{sec.mean}
\setcounter{equation}{0}

Our main tool for studying the KM in this paper is
the mean field equation. It is  derived in the limit as the number of oscillators tends to infinity.
In this section, we explain the mathematical basis of the mean field limit for the model
at hand.

In studies of large groups of interacting dynamical systems, it is often useful to take 
the limit as the size of system goes to infinity \cite{Gol16}. The continuum limit has been instrumental
in the analysis of synchronization and chimera states in the KM \cite{Str00, Ome13}.
For the discrete model \eqref{AKM}, the continuum limit has the following form 
\be\lbl{MF}
{\p \over \p t} \rho (t,u,\phi,x) +
{\p \over \p u } \left\{\rho(t, u,\phi,x) V(t,u,\phi,x) \right\}=0
\ee
where
\be\lbl{def-V}
V(t, u,\phi,x)=\phi+
K \int_I \int_\R \int_{\SS}  W(x,y) \sin(v - u ) \rho (t,v \phi, y) dv d\phi d y
\ee
Here, $\rho(t,u,\phi,x)$ stands for a probability density function on $G=\SS\times \R\times I$
parametrized by time $t\in\R^+$. It aims to describe the distribution of the oscillators of the discrete
model \eqref{KM} at time $t$, provided both IVPs for \eqref{KM} and \eqref{MF} are initialized 
appropriately.

For the original KM on complete graphs, the rigorous mathematical justification of the 
mean field equation \eqref{MF} was given by Lancellotti  \cite{Lan05}. It relies on the classical
theory for the Valsov equation \cite{Neu78}. For the KM on graphs, the justification of the mean field 
limit was developed in \cite{ChiMed16}.  The analysis in \cite{ChiMed16} uses Lipschitz
continuity of $W$ and does not apply to the singular kernel $W(x,y)=(xy)^{-\alpha}$ used in this 
paper. However, if $W$ is replaced by the truncated kernel 
$W_C(x,y)=C\wedge (xy)^{-\alpha}$ for arbitrary  $C>0$ then the interpretation of the mean field limit in
\cite{ChiMed16} carries over to the problem at hand. Using the truncated kernel does not limit 
the applications of our results, as all effects considered in this work can be achieved with $W_C$
instead of $W$. Thus, in the remainder of this section, we explain the mathematical meaning
of the mean field equation \eqref{MF} and its relation to the discrete system \eqref{AKM} assuming
that $W:=W_C$ for sufficiently large $C>0$.

Consider the following initial condition for \eqref{MF}
\be\lbl{MF-ic}
\rho(0,u,\phi,x)=\rho^0(u,\phi,x)g(\phi),
\ee
where the nonnegative $\rho^0\in L^1(G)$ satisfies
\be\lbl{MF-norm}
\int_\SS \rho^0(u,\phi,x)du=1\quad  \forall (\phi,x)\in \R\times I.
\ee
Then, as shown in \cite{ChiMed16}, there is a unique weak solution of the IVP \eqref{MF},
\eqref{MF-ic}. Moreover, $\rho(t,\cdot)$ is a probability density function on $G$
for every $t\in [0,T]$. Thus, one can define the probability measure
\be\lbl{CMeasure}
\mu_t(A)=\int_A \rho(t,u,\phi, x) du d\phi dx, \quad A\in\mathcal{B}(G),
\ee
where $\mathcal{B}(G)$ stands for the collection of Borel subsets of $G$.

On the other hand, the solution of the IVP for
discrete problem 
\eqref{AKM} defines the 
empirical measure
\be\lbl{EMeasure}
\mu_t^n(A)=n^{-1} \sum_{i=1}^n \1_A \left(\theta_{ni}(t), \phi_i, x_{ni}\right),\quad A\in\mathcal{B}(G).
\ee
The analysis in \cite{ChiMed16}, based on the Neunzert's theory for Vlasov equation
(cf.~\cite{Neu78}), shows that
\be\lbl{convergence-of-measures}
\sup_{t\in [0,T]} d(\mu_t^n, \mu_t)\to 0\quad \mbox{as}\quad n\to\infty,
\ee
provided
\be\lbl{convergence-of-measures-ic}
d(\mu_0^n, \mu_0)\to 0\quad \mbox{as}\quad n\to\infty,
\ee
Here, $d(\cdot,\cdot)$ stands for the bounded Lipschitz distance, which metrizes
weak convergence for the space of Borel probability measures on $G$ \cite{Dud02}.
Thus, if the initial distribution of oscillators (i.e., the initial conditions for \eqref{AKM})
converges weakly to $\mu_0$ as $n\to\infty$), then the solution of the continuous problem 
\eqref{MF}, \eqref{MF-ic} approximates the distribution of oscillators
around $\SS$ for every
$t\in [0,T]$. The same applies to the empirical measures generated by the  discrete 
model on random graph \eqref{KM}  via Lemma~\ref{lem.average} (cf.~\cite{ChiMed16}).

\section{Synchronization}\lbl{sec.sync}
\setcounter{equation}{0}
With the mean field limit \eqref{MF} in hand, we are equipped to study dynamics of the 
KM on power law graphs \eqref{KM}. First, note that $\rho(t,u,\phi,x)=(2\pi)^{-1} g(\omega)$ is a probability
density on $G$, corresponding to the state of the network with 
all oscillators distributed uniformly around $\SS$,  which will be referred to as the incoherent state. 
It is a solution of \eqref{MF}. 

\begin{figure}
\begin{center}
\textbf{a}\includegraphics[height=1.8in,width=2.0in]{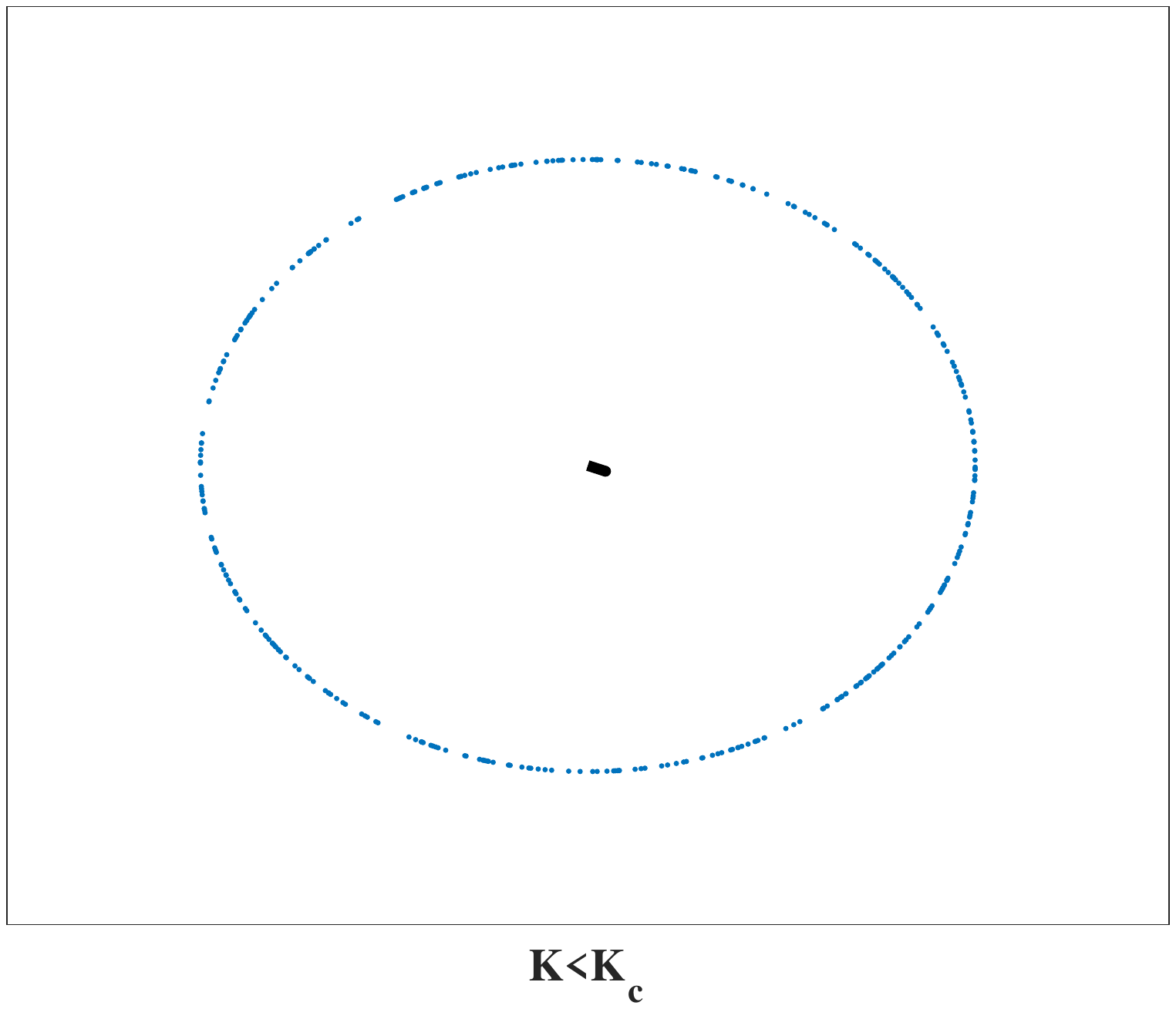}\qquad \qquad
\textbf{b}\includegraphics[height=1.8in,width=2.0in]{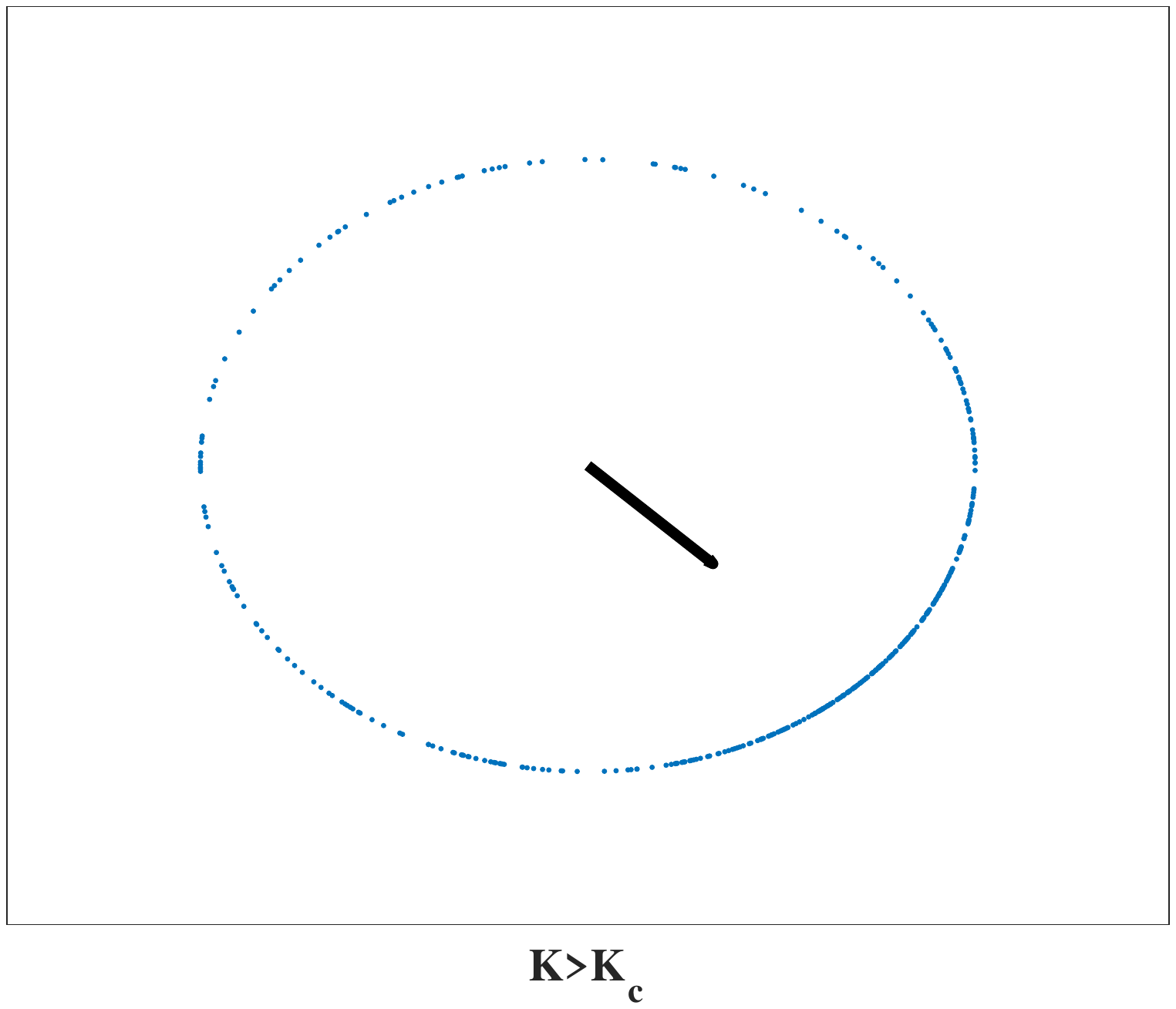}
\end{center}
\caption{The distribution of the oscillators for the values of the 
coupling strength below (\textbf{a})  and above (\textbf{b}) the synchronization 
threshold. The black arrows indicate the order parameter in each case.
}
\lbl{f.0}
\end{figure}

To study stability of the incoherent state, we use the results of \cite{ChiMed16}. To this 
end, throughout this section, we assume that $\alpha\in (0,1/2)$ and 
the probability density function $g$ characterizing the distribution of intrinsic
frequency $\phi$ is even. 

Next, consider 
a  self-adjoint operator $\mathcal{W}: L^2(I)\to L^2(I)$ defined by
\be\lbl{kernel}
\mathcal{W}[f]=\int_I W(\cdot, y)f(y)dy, \quad f\in L^2(I).
\ee

We will need the spectral properties of $\mathcal{W}$ summarized in the 
following lemma.
\begin{lem}\lbl{lem.EV}
For $\alpha\in (0,1/2)$, the spectrum of $\mathcal{W}$ consists of a simple 
eigenvalue $(1-2\alpha)^{-1}$ and the zero eigenvalue of infinite multiplicity.
\end{lem}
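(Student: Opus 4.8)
The approach I would take exploits the fact that the kernel factorizes, $W(x,y)=(xy)^{-\alpha}=\psi(x)\psi(y)$ with $\psi(x):=x^{-\alpha}$, so that $\mathcal{W}$ is a rank-one operator whose spectral theory is entirely elementary. The role of the hypothesis $\alpha\in(0,1/2)$ is precisely to guarantee $\psi\in L^2(I)$: indeed $\|\psi\|_{L^2(I)}^2=\int_I x^{-2\alpha}\,dx=(1-2\alpha)^{-1}$ is finite if and only if $2\alpha<1$. (Equivalently $W\in L^2(I\times I)$, so $\mathcal{W}$ is a Hilbert--Schmidt, hence compact, self-adjoint operator.) For $\alpha\ge1/2$ the operator is unbounded on $L^2(I)$ and the statement breaks down; keeping track of this threshold is the only point that needs attention, and it is not a real difficulty.

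With the factorization in hand, the first step is to rewrite
\[
\mathcal{W}[f](x)=x^{-\alpha}\int_I y^{-\alpha}f(y)\,dy=\langle f,\psi\rangle_{L^2(I)}\,\psi(x),\qquad f\in L^2(I),
\]
so that $\operatorname{Range}(\mathcal{W})=\operatorname{span}\{\psi\}$. The second step reads off the eigenpairs directly: $\mathcal{W}[\psi]=\|\psi\|_{L^2(I)}^2\,\psi=(1-2\alpha)^{-1}\psi$, so $(1-2\alpha)^{-1}$ is an eigenvalue with eigenfunction $\psi$, and it is simple because every eigenfunction for a nonzero eigenvalue must lie in the one-dimensional range of $\mathcal{W}$. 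The third step treats $0$: $\ker\mathcal{W}=\{f\in L^2(I):\langle f,\psi\rangle_{L^2(I)}=0\}=\psi^{\perp}$ has codimension one in the infinite-dimensional space $L^2(I)$, hence is infinite-dimensional, so $0$ is an eigenvalue of infinite multiplicity.

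It remains to check that nothing else lies in the spectrum. Since $\mathcal{W}$ is compact and self-adjoint, $\sigma(\mathcal{W})$ consists of its eigenvalues together with $0$, each nonzero eigenvalue having finite multiplicity with eigenspace contained in $\operatorname{Range}(\mathcal{W})$; as that range is one-dimensional, $(1-2\alpha)^{-1}$ is the unique nonzero spectral point. Alternatively, and without invoking the spectral theorem, one solves $(\mathcal{W}-\lambda)f=h$ from the rank-one form: for $\lambda\notin\{0,(1-2\alpha)^{-1}\}$,
\[
f=-\lambda^{-1}h-\lambda^{-1}\bigl(\lambda-(1-2\alpha)^{-1}\bigr)^{-1}\langle h,\psi\rangle_{L^2(I)}\,\psi,
\]
which is bounded in $h$, so $\lambda$ lies in the resolvent set. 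Either way the proof is short; I anticipate no genuine obstacle beyond the integrability bookkeeping at $\alpha=1/2$ mentioned above.
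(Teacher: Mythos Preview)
Your argument is correct and follows essentially the same approach as the paper: both exploit the factorization $W(x,y)=\psi(x)\psi(y)$ with $\psi(x)=x^{-\alpha}$ to recognize $\mathcal{W}$ as a rank-one operator, read off the nonzero eigenvalue $\|\psi\|_{L^2(I)}^2=(1-2\alpha)^{-1}$ on $\operatorname{span}\{\psi\}$, and identify the kernel with $\psi^\perp$. Your write-up is in fact somewhat more complete than the paper's, since you explicitly justify why $\alpha\in(0,1/2)$ is needed (so that $\psi\in L^2(I)$ and $\mathcal{W}$ is bounded/Hilbert--Schmidt) and you close the gap between ``these are the eigenvalues'' and ``this is the entire spectrum'' via compactness or the explicit resolvent formula, points the paper leaves implicit.
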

\begin{proof}
Suppose $\zeta$ is an EV of $\mathcal{W}$, and $f\in L^2(I)$ is a corresponding 
eigenfunction. Then
$$
x^{-\alpha}\int_I y^{-\alpha} f(y) dy =\zeta f(x).
$$
If $f$ is orthogonal to the subspace of $L^2(I)$ spanned by $x^{-\alpha}$, then 
$\zeta=0$. Otherwise, $f$ must be equal to $Cx^{-\alpha}$ for some $C\neq 0$
and
$$
\zeta=\int_I y^{-2\alpha}dy=(1-2\alpha)^{-1}.
$$
\end{proof}

By \cite[Theorem~3.4]{ChiMed16}, the incoherent state $\rho=(2\pi)^{-1} g(\omega)$ 
is  linearly stable for $K\le K_c,$ where the critical coupling strength 
\be\lbl{Kcritical}
K_c={2\over \pi g(0) \zeta_{max}(\mathcal{W})},
\ee
and $\zeta_{max}(\mathcal{W})$ is the largest positive eigenvalue of  $\mathcal{W}$. 
Using \eqref{Kcritical} and Lemma~\ref{lem.EV}, we conclude that the onset of synchronization 
for the KM on scale free graphs is defined by the 
critical value 
\be\lbl{critical-pwl}
K_c={2(1-2\alpha)\over \pi g(0)}, \quad \alpha\in (0,1/2).
\ee
In particular, for the Gaussian density $g(\phi)= e^{-\phi^2/2}/\sqrt{2\pi}$, we have
$$
K_c(\alpha)=2\sqrt{2\over \pi}(1-2\alpha).
$$
For $K\le K_c$, the incoherent state is linearly stable and the oscillators are distributed 
approximately uniformly around $\SS$ (see Fig.~\ref{f.0}a). For values of $K>K_c$, numerics
shows a gradual buildup of coherence in the system dynamics (see Fig.~\ref{f.0}b).
Note that the synchronization threshold \eqref{critical-pwl} can be made arbitrarily small by 
taking $\alpha$
close $1/2$. Such good synchronizability of the network is an implication of the scale free
connectivity.

\section{Repulsive coupling}\lbl{sec.repulsive}
\setcounter{equation}{0}

In the previous section, we found that the incoherent state is stable for $K\le K_c$ and
is unstable otherwise. For $K>K_c,$ coherence gradually builds up and the asymptotic 
state of the system becomes closer and closer to complete synchronization as $K\to\infty$.
In this section, we focus on pattern formation in repulsively coupled networks, i.e., 
we consider \eqref{KM} for $K<0$. In this case, the incoherent state is stable, but as we will
see below, there are other stable states. To make the model analytically tractable,
we set the intrinsic 
frequencies equal to the same value $\phi_i=\phi,\; i\in [n]$. By switching to a uniformly
rotating frame of coordinates, without loss of generality we assume $\phi=0.$ 

Thus, in the remainder of this section, we will be dealing with the following model
\be\lbl{rKM}
\dot u_{ni}=(n\rho_n)^{-1}\sum_{j=1}\xi_{nij}(\omega) \sin (u_{ni}-u_{nj}),\quad i\in [n].
\ee
Here,  we set $K$ to $-1$, since this can always be achieved by rescaling time.

The averaged model then becomes
\be\lbl{rAKM}
\dot v_{ni}=n^{-1}\sum_{j=1} \bar W_{nij} \sin (v_{ni}-v_{nj}),\quad i\in [n].
\ee
The corresponding mean field equation is given by
\be\lbl{rMF}
{\p \over \p t} \rho(t,u,x) +{\p\over \p u}\left\{ V(t,u,x) \rho(t,u,x)\right\}=0,
\ee
where
\be\lbl{rV}
V(t,u,x)= \int_\SS \int_I W(x,y) \sin(u-v) \rho(t,v,y) dvdy.
\ee

The synchronous state is always unstable
for the KM with repulsive coupling  on any undirected graph 
(cf.~\cite[Theorem~3.7]{MedTan15a}). Thus, it is unstable for the 
KM on power law graphs \eqref{rKM}. The proof of this fact uses a 
variational argument. A similar variational principle applies to the 
averaged model \eqref{rAKM}. It is used in the next subsection to 
identify attractors in the repulsively coupled model.

\subsection{The Lyapunov function}
For the classical KM on complete graphs, Kuramoto introduced the order 
parameter
\be\lbl{Korder}
R_{cmp}(u)=n^{-1} \sum_{j=1}^n e^{\iu u_i}, \quad u=(u_1,u_2, \dots, u_n)\in \SS^n.
\ee
to study the transition to synchronization \cite{Kur75}. The complex-valued order
parameter \eqref{Korder} provides a convenient measure of coherence in the system dynamics. 
Indeed, if the phases $u_i, \; i\in [n],$ are distributed around $\SS$ uniformly  then $|R_{cmp}(u)|\approx 0$,
whereas if they evolve in synchrony then $|R_{cmp}(u)|\approx 1$. 

For the KM on weighted 
graphs \eqref{rAKM}, there is a  suitable generalization of the order parameter:
\be\lbl{order}
R(u)=n^{-1} \sum_{j=1}^n a_{nj} e^{\iu u_i}, \quad a_{ni}=x_{ni}^{-\alpha}, \; i\in [n].
\ee
Using $R$, \eqref{rAKM} can be rewritten as follows
\be\lbl{RAKM}
\dot u_{ni} = a_{ni} |R(u_n)| \sin ( u_{ni}-\psi), \; i\in [n],
\ee
where $\psi=\operatorname{Arg} R(u_n)$. As follows from \eqref{RAKM}, there are
two classes of equilibria of \eqref{AKM}:
\begin{eqnarray*}
\mathcal{E}_{n,1} &=& \left\{ u\in \SS^n:\; \left(R(u)\neq 0\right)\; \&\; \left(u_j-u_i\in \{0,\pi\}, \; 
i,j\in [n]\right)\right\},\\
\mathcal{E}_{n,2} &=& \left\{ u\in \SS^n:\; R(u)= 0 \right\}.
\end{eqnarray*}

\begin{thm}\lbl{thm.omega}
The $\omega$-limit set of \eqref{AKM} is $\mathcal{E}_{n,1}\cup \mathcal{E}_{n,2}$.
\end{thm}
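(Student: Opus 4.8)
The plan is to show that the averaged repulsive model, in the form \eqref{RAKM}, is (up to a positive constant) the negative gradient flow on the torus $\SS^n$ of the squared order parameter $E(u)=|R(u)|^2$, to apply LaSalle's invariance principle on the compact invariant phase space, and then to check by direct inspection that the set on which the flow is stationary is exactly $\mathcal{E}_{n,1}\cup\mathcal{E}_{n,2}$. Here the $\omega$-limit set of the system is read as the union of the $\omega$-limit sets of all its trajectories. Expanding
$$E(u)=n^{-2}\sum_{j,k=1}^{n} a_{nj}a_{nk}\cos(u_{nj}-u_{nk}),$$
one gets $\partial E/\partial u_{ni}=-2n^{-2}a_{ni}\sum_{k}a_{nk}\sin(u_{ni}-u_{nk})$, and since $n^{-1}\sum_{k}a_{nk}\sin(u_{ni}-u_{nk})=|R(u_n)|\sin(u_{ni}-\psi)$ with $\psi=\operatorname{Arg}R(u_n)$, the right-hand side of \eqref{RAKM} equals $-(n/2)\,\partial E/\partial u_{ni}$. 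Hence along any solution of \eqref{RAKM},
$$\frac{d}{dt}E(u_n(t))=\sum_{i=1}^{n}\frac{\partial E}{\partial u_{ni}}\dot u_{ni}=-\frac{n}{2}\sum_{i=1}^{n}\left(\frac{\partial E}{\partial u_{ni}}\right)^{2}=-2\,\|\dot u_n(t)\|_{2,n}^{2}\le 0,$$
and the derivative vanishes precisely at equilibria of \eqref{RAKM}.

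The phase space $\SS^n$ is compact and invariant under the (smooth, globally defined) flow of \eqref{RAKM}, and $E$ is smooth with $\dot E\le 0$. By LaSalle's invariance principle every trajectory approaches the largest invariant subset of $\{u:\dot E(u)=0\}$; by the first paragraph this set is exactly the equilibrium set of \eqref{RAKM}, which is itself invariant. Therefore the $\omega$-limit set of every trajectory is contained in the equilibrium set, and since each equilibrium is its own $\omega$-limit set, the $\omega$-limit set of the averaged model coincides with the equilibrium set of \eqref{RAKM}.

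It remains to identify this equilibrium set with $\mathcal{E}_{n,1}\cup\mathcal{E}_{n,2}$. From \eqref{RAKM}, $\dot u_n=0$ iff $a_{ni}|R(u_n)|\sin(u_{ni}-\psi)=0$ for all $i\in[n]$; as $a_{ni}=x_{ni}^{-\alpha}>0$ for $i\in[n]$, this holds iff either $R(u_n)=0$, i.e. $u_n\in\mathcal{E}_{n,2}$, or $R(u_n)\ne0$ and $u_{ni}\in\{\psi,\psi+\pi\}$ for all $i$, in which case all pairwise differences $u_{nj}-u_{ni}$ lie in $\{0,\pi\}$, i.e. $u_n\in\mathcal{E}_{n,1}$. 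Conversely, if $R(u)\ne0$ and $u_{j}-u_{i}\in\{0,\pi\}$ for all $i,j$, then all phases lie in the two-point set $\{u_{1},u_{1}+\pi\}$, so $R(u)$ is a nonzero real multiple of $e^{\iu u_{1}}$, hence $\psi\in\{u_{1},u_{1}+\pi\}$ and $\sin(u_{i}-\psi)=0$ for all $i$; thus every point of $\mathcal{E}_{n,1}$ (and trivially of $\mathcal{E}_{n,2}$) is an equilibrium. This gives the claimed set equality. The only step that requires any real computation is the Lyapunov identity of the first paragraph, equivalently the gradient structure of \eqref{RAKM}; the LaSalle argument and the elementary description of the zero set of the right-hand side of \eqref{RAKM} are routine, so I anticipate no serious obstacle.
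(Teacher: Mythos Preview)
Your proof is correct and follows essentially the same approach as the paper: the paper uses the Lyapunov function $L(u)=\frac{1}{2n}|R(u)|^2$ (a constant multiple of your $E$), observes that \eqref{RAKM} is the associated gradient system, and invokes the Barbashin--Krasovskii--LaSalle principle to conclude that the $\omega$-limit set is the equilibrium set. Your write-up is in fact slightly more careful, both in carrying the constants through the gradient identity and in explicitly verifying that the equilibrium set equals $\mathcal{E}_{n,1}\cup\mathcal{E}_{n,2}$, which the paper states just before the theorem without detailed justification.
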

\begin{proof}
Let 
\be\lbl{def-L}
L(u)={1\over 2n} | R(u)|^2
\ee
and note that 
\be\lbl{L-note}
\begin{split}
2nL(u) &= \left[ n^{-1} \sum_{j=1}^n a_{nj} \cos u_{ni}\right]^2 +
\left[ n^{-1} \sum_{j=1}^n a_{nj} \sin u_{ni}\right]^2\\
&= n^{-2} \sum_{i,j=1}^n a_{ni} a_{nj} \cos(u_{ni}-u_{nj})
\end{split}
\ee
Further, 
\be\lbl{dLdu}
{\p \over \p u_i} L(u) = -n^{-1} \sum_{j=1}^n \sin (u_{ni}-u_{nj}).
\ee
Thus, \eqref{RAKM} is a gradient system
$$
\dot u_n =-\nabla L(u_n).
$$
and
\be\lbl{Lnegative}
\dot L =(\nabla L(u_n), \dot u_n) =- \sum_{i=1}^n \left( n^{-1} \sum_{j=1}^n 
 a_{ni} |R(u_n)|\sin(u_{ni}-\phi)\right)^2\le 0.
\ee
By the Barbashin-Krasovskii-Lassale extension of the Lyapunov's direct method 
\cite{BarKra52, Kras-Stability, LaSalle60}, we conclude that
the $\omega$-limit set of \eqref{AKM} is the set of equilibria 
$\mathcal{E}_{n,1}\cup\mathcal{E}_{n,2}$.
\end{proof}

\subsection{Stability of equilibria in $\mathcal{E}_{n,1}$}
In this subsection, to gain first insights into the asymptotic states of the 
repulsively coupled KM, we study stability of phase locked steady states of \eqref{RAKM}. 

By the definition of $\mathcal{E}_{n,1}$, for $u=(u_1,u_2,\dots, u_n)\in\mathcal{E}_{n,1}$ we have
$
u_j-u_i\in\{0, \pi\},
$
$
i,j, \in [n].
$
Thus, up to translation by a constant vector 
$$
\mathcal{E}_{n,1}=\bigcup_{m=1}^n \mathcal{E}^{(m)}_{n,1},
$$
where $\mathcal{E}^{(m)}_{n,1}$ consists of equilibria with precisely $m$
coordinates equal to $0$ and the rest to $\pi$. For example,
\be\lbl{step-m-n}
(\underbrace{0,0,\dots,0}_{m}, \underbrace{\pi,\pi,\dots,\pi}_{n-m})\in \mathcal{E}^{(m)}_{n,1}.
\ee
Suppose $u_n^{(m)}=(u_{n,1}^{(m)},u_{n,2}^{(m)},\dots, u_{n,n}^{(m)})\in  \mathcal{E}^{(m)}_{n,1}$
for some $m\in [n]$. 
Then there is an $m$-element subset $\Lambda_n^{(m)}\subset [n],$ $|\Lambda_n^{(m)}|=m$
such that
\be\lbl{def-uni}
u_{n,i}^{(m)} =\left\{ \begin{array}{cc} 0, & i\in \Lambda_n^{(m)},\\
\pi, & i\notin \Lambda_n^{(m)}.
\end{array}
\right.
\ee
Denote the matrix of linearization of \eqref{rKM} about $u_{n}^{(m)} $ by $A$.
A straightforward computation shows that 
\be\lbl{Jacob}
A=D-vv^\t,
\ee
where  $D=(d_{ij})$ is a diagonal matrix with nonzero entries
\be\lbl{Dii}
d_{ii}=\left\{ \begin{array}{cc} x_{ni}^{-\alpha} d, & i\in \Lambda_n^{(m)},\\
-x_{ni}^{-\alpha} d, & i\notin \Lambda_n^{(m)},
\end{array}
\right.
\quad d:=n^{-1}\sum_{j=1}^n (-1)^{\sigma (j)} x_{nj}^{-\alpha},
\ee
\be\lbl{rank-one}
v=\left((-1)^{\sigma (1)} x_{n1}^{-\alpha},(-1)^{\sigma (2)} x_{n2}^{-\alpha},\dots,
(-1)^{\sigma (n)} x_{nn}^{-\alpha}\right),\quad
\sigma(i)=\left\{ \begin{array}{cc}
0, & i\in \Lambda_n^{(m)},\\
1, & i\notin \Lambda_n^{(m)}.
\end{array}
\right.
\ee

\begin{lem}\lbl{lem.interlace}
Let $u\in\mathcal{E}_{n,1}^{(m)}, m\in [n].$ 

If $d>0$ then the matrix of linearization about $u^{(m)}_n$, $A$, 
has at least $m-1$ positive eigenvalues of $D$, at least  
$n-m-1$ negative eigenvalues, and at least one zero eigenvalue.

If $d<0$, $A$ has at least $n-m-1$ positive eigenvalues of $D$, at least  
$m-1$ negative eigenvalues, and at least one zero eigenvalue.
\end{lem}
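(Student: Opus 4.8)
The plan is to deduce the inertia of the Jacobian $A$ in \eqref{Jacob} from that of the diagonal matrix $D$ via the interlacing inequalities for rank-one perturbations, and then to account for a forced zero eigenvalue coming from the phase-shift symmetry of the model. First I would record the following facts about \eqref{Jacob}--\eqref{rank-one}: $D$ is diagonal, hence symmetric with real spectrum; $vv^\t$ is symmetric, positive semidefinite, and of rank one; so $A$ is symmetric and is obtained from $D$ by subtracting a rank-one positive semidefinite matrix. Moreover, since $x_{ni}=i/n>0$ for $i\in[n]$ and $d\neq0$ under either hypothesis, every diagonal entry of $D$ in \eqref{Dii} is nonzero; when $d>0$ exactly $m$ of them (those indexed by $\Lambda_n^{(m)}$) are positive and the other $n-m$ are negative, and when $d<0$ the two counts are interchanged.

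Next I would order the eigenvalues of $D$ and of $A$ decreasingly, $\lambda_1\ge\cdots\ge\lambda_n$ and $\mu_1\ge\cdots\ge\mu_n$. Since $A = D - vv^\t$ with $vv^\t\succeq0$ of rank one, Weyl's inequalities for eigenvalues of sums of symmetric matrices (equivalently, the Courant--Fischer min-max principle) give the interlacing
$$
\lambda_{j+1}\le\mu_j\le\lambda_j,\qquad j=1,\dots,n,\qquad \lambda_{n+1}:=-\infty .
$$
In the case $d>0$ this reads $\lambda_1\ge\cdots\ge\lambda_m>0>\lambda_{m+1}\ge\cdots\ge\lambda_n$, so $\mu_j\ge\lambda_{j+1}>0$ for $1\le j\le m-1$ and $\mu_j\le\lambda_j<0$ for $m+1\le j\le n$; hence $A$ has at least $m-1$ positive and at least $n-m-1$ (indeed $n-m$) negative eigenvalues. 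The case $d<0$ is identical after interchanging $\Lambda_n^{(m)}$ with its complement.

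For the zero eigenvalue I would invoke the rotational symmetry of the dynamics: the right-hand side of \eqref{rKM}---and likewise of its average \eqref{rAKM}---depends on the phases only through the differences $u_{ni}-u_{nj}$, so the vector field is invariant under the shift $u_n\mapsto u_n+c\1$ for every $c\in\R$. Differentiating this identity and evaluating at the equilibrium $u_n^{(m)}$ shows that the linearization satisfies $A\1=0$, so $0$ is an eigenvalue of $A$. Since the $m-1$ positive and $n-m-1$ negative eigenvalues found above are strictly nonzero, this zero eigenvalue is distinct from all of them, which proves the lemma.

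The only delicate point is the bookkeeping in the last step. Interlacing alone already fixes the sign of every eigenvalue of $A$ except the single ``middle'' one $\mu_m$ (when $d>0$ it leaves $\mu_1,\dots,\mu_{m-1}$ positive and $\mu_{m+1},\dots,\mu_n$ negative), and it is the symmetry---not interlacing---that identifies $\mu_m=0$; thus the full inertia of $A$ is $(m-1,\,1,\,n-m)$ for $d>0$ and $(n-m,\,1,\,m-1)$ for $d<0$, which in particular yields the stated bounds. The extreme cases $m=1$ and $m=n$, where one of the bounds is vacuous, need no separate treatment, and it is worth noting at the outset that the symmetry of $A$ is precisely what makes the sign count of its (real) eigenvalues meaningful.
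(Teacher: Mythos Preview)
Your proof is correct and follows essentially the same route as the paper: interlacing for a rank-one perturbation of the diagonal matrix $D$ to count positive and negative eigenvalues, followed by the observation that $A\mathbf{1}=0$ to produce the zero eigenvalue. The only cosmetic differences are that the paper orders eigenvalues increasingly and cites \cite[Theorem~4.3.4]{HornJohnson} for the interlacing, and that it phrases the null-vector argument as ``the rows of $A$ sum to zero'' rather than deriving it from the phase-shift invariance; your observation that the interlacing in fact forces $n-m$ (not merely $n-m-1$) negative eigenvalues when $d>0$ is a slight sharpening of what the paper records.
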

\begin{cor}\lbl{cor.unstable}
All equilibria from $\mathcal{E}_{n,1}^{(m)}, 1<m\le n,$ are unstable. In particular, 
 all solutions of the form \eqref{step-m-n}
are unstable for $m>1$.
\end{cor}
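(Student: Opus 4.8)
The plan is to deduce the corollary directly from Lemma~\ref{lem.interlace}. The key observation is that for $1<m\le n$, at least one of the two alternatives in the lemma guarantees a strictly positive eigenvalue of $A$. Indeed, suppose first that $d>0$. Then, since $m>1$, we have $m-1\ge 1$, so $A$ has at least one positive eigenvalue, hence $u_n^{(m)}$ is unstable. If instead $d<0$, then $A$ has at least $n-m-1$ positive eigenvalues; this is $\ge 1$ as long as $m\le n-2$. The only case left uncovered is $d<0$ together with $m\in\{n-1,n\}$. The case $d=0$ must also be addressed, since Lemma~\ref{lem.interlace} is stated only for $d\ne 0$. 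So the bulk of the argument reduces to handling these residual cases, and I expect that to be the main (and only) obstacle.

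For the case $d=0$: by \eqref{Dii} the diagonal matrix $D$ vanishes identically, so $A=-vv^\t$ is negative semidefinite and rank one. Its nonzero eigenvalue is $-\|v\|_2^2<0$, and $0$ is an eigenvalue of multiplicity $n-1$. To conclude instability one must look at the flow on the center manifold (or argue directly). Here I would invoke the gradient structure from Theorem~\ref{thm.omega}: near an equilibrium, a gradient flow $\dot u_n=-\nabla L(u_n)$ converges to the equilibrium set, and an isolated equilibrium that is not a strict local maximum of $L$ cannot be asymptotically stable; one checks that $u_n^{(m)}$ with $d=0$ and $m>1$ is not a local maximum of $L$ by exhibiting a direction of increase (e.g. rotating a single oscillator whose coefficient $a_{ni}$ is nonzero changes $|R(u_n)|$ at second order with the appropriate sign), hence it is unstable. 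Alternatively, since the problem is analytic, a Łojasiewicz-type argument shows convergence to a single equilibrium and rules out stability of a non-maximizer.

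For the case $d<0$ and $m\in\{n-1,n\}$: by symmetry of the configuration space under the involution $u\mapsto u+\pi\mathbf 1$ (which swaps the roles of $\Lambda_n^{(m)}$ and its complement and replaces $m$ by $n-m$ and $d$ by $-d$), the equilibrium $u_n^{(m)}$ with parameters $(m,d)$, $d<0$, is conjugate by a rigid rotation — which is a symmetry of \eqref{RAKM}, since the right-hand side depends only on phase differences — to $u_n^{(n-m)}$ with parameter $-d>0$. For $m\in\{n-1,n\}$ the conjugate has $n-m\in\{0,1\}$; the case $n-m=1$ falls under "$d>0$ and $m=1$", which has no positive eigenvalues guaranteed, so this symmetry trick only helps when $n-m\ge 2$, i.e. $m\le n-2$ — already covered. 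Thus the genuinely hard residual cases are exactly $d<0,\ m=n$ and $d<0,\ m=n-1$ (and their rotation-conjugates $d>0,\ m=0$, $d>0,\ m=1$, which are the same equilibria). For these I would again fall back on the gradient/Lyapunov argument: compute the Hessian of $-L$ at $u_n^{(m)}$ — it is exactly $A=D-vv^\t$ — and show $A$ has a positive eigenvalue by testing against a vector supported on two coordinates of different $\sigma$-type, or, if $A$ is merely semidefinite with a nontrivial kernel, verify via a second-order Taylor expansion of $L$ that $u_n^{(m)}$ is not a local maximizer and therefore, by the gradient structure, not stable. The main obstacle is organizing this finite case analysis cleanly so that every pair $(m,d)$ with $1<m\le n$ is covered; the linear-algebra input is routine once Lemma~\ref{lem.interlace} is in hand.
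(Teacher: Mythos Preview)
The paper gives no separate proof of the corollary; it is stated as an immediate consequence of Lemma~\ref{lem.interlace} (the paper jumps straight to the proof of the lemma). Your approach---deduce instability from the presence of a positive eigenvalue guaranteed by the lemma---is therefore exactly the intended one, and your main case ($d>0$, $m>1\Rightarrow m-1\ge 1$ positive eigenvalues) is all the paper has in mind.

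Two of your ``residual'' concerns dissolve on inspection. First, for $u\in\mathcal{E}_{n,1}^{(m)}$ one has $e^{\iu u_j}=(-1)^{\sigma(j)}$, so $R(u)=n^{-1}\sum_j(-1)^{\sigma(j)}x_{nj}^{-\alpha}=d$; the defining condition $R(u)\neq 0$ of $\mathcal{E}_{n,1}$ therefore forces $d\neq 0$, and your $d=0$ case is vacuous. Second, $m=n$ means every $\sigma(j)=0$, hence $d=n^{-1}\sum_j x_{nj}^{-\alpha}>0$ automatically; the pair $(d<0,\,m=n)$ never occurs.

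That leaves only $(d<0,\,m=n-1)$. Your symmetry observation is correct: translating by $\pi$ carries this equilibrium to one in $\mathcal{E}_{n,1}^{(1)}$ with $d>0$, a case the corollary \emph{excludes}. In fact the interlacing argument then shows $A$ has $n-1$ negative eigenvalues and one zero eigenvalue, so linearization alone does not give instability here. This edge case is genuinely not covered by the lemma, and the paper appears simply to overlook it (or to regard it, via the $\pi$-shift, as the excluded $m=1$ configuration). Your proposed fallback---use the gradient structure and show $u_n^{(m)}$ is not a local maximizer of $L$---is a reasonable route if you want a complete argument, but you should be aware that it goes beyond what the paper actually does.
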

\begin{proof}[Proof of Lemma~\ref{lem.interlace}.]
Suppose $d>0$.
Let $\lambda_k(D)$ and $\lambda_k(A)$ denote the eigenvalues of $D$ and $A$ arranged in the increasing
order counting multiplicity. By the interlacing inequalities (cf.~\cite[Theorem~4.3.4]{HornJohnson}),
we have 
$$
\lambda_{n-m-1}(A)\le \lambda_{n-m} (D)<0,
$$
 and
$$
0<\lambda_{n-m+1}(D)\le \lambda_{n-m+2}(A).
$$
Finally, there is at least one zero eigenvalue of $A$, because its rows sum to $0$.

The case $d<0$ is analyzed similarly.
\end{proof}

\subsection{The Ott-Antonsen Anzats}
Unlike equilibria in $\mathcal{E}_{n,1}$ considered in the previous subsection, 
equilibria in $\mathcal{E}_{n,2}$ are harder to identify explicitly. To study properties
of the equilibria in $\mathcal{E}_{n,2}$, we will invoke the mean field
equation: 
\be\lbl{rMF}
{\p \over \p t} \rho(t,u,x) +{\p \over \p u} \left\{ V(t,u,x) \rho(t,u,x)\right\}=0,
\ee
where 
\be\lbl{rV}
V(t,u,x)= \int_I \int_\SS W(x,y) \sin (u-v) \rho(t,v,y) dv dy.
\ee

To identify a class of stable steady states of \eqref{rMF}, we employ
the Ott-Antonsen Anzats \cite{OttAnt08}, i.e., we look for solutions of \eqref{rMF} in the 
following form:
\be\lbl{OA}
\rho(t,u,x)={1\over 2\pi} \left( 1+ \sum_{k=1}^\infty \left(
\overline{z(t,u)}^k e^{\iu ku} +
 z(t,u)^k e^{-\iu ku} \right) \right).
\ee
If $\sup_{x\in I, t\in [0,T]} |z(t,x)|<1$ the series on the right hand side
of \eqref{OA} is absolutely convergent. Plugging \eqref{OA} into 
\eqref{rMF}, after straightforward albeit tedious manipulations, one 
verifies that \eqref{OA} solves \eqref{rMF}, provided $z(t,x)$ satisfies the 
following equation
\be\lbl{zdot}
{\p \over \p t} z(t,x)={1\over 2 x^\alpha} \left(z(t,x)^2-1\right) \mathcal{R}[z(t,\cdot)],
\ee
where
\be\lbl{corder}
\mathcal{R}[v]=\int_I y^{-\alpha} v(y)dy.
\ee
From \eqref{OA}, it follows that 
\be\lbl{inv-z}
z(t,x)= \int_0^{2\pi} \rho(t,u,x) e^{\iu u} du.
\ee
Using \eqref{inv-z}, we can express $\mathcal{R}[z]$ in
terms  of the density $\rho$:
\be\lbl{corder-1}
\mathcal{R}[z]= \int_I\int_\SS y^{-\alpha} \rho(t,u,y) e^{\iu u} dudy,
\ee
i.e., $\mathcal{R}$ is the continuous counterpart
 of the order parameter \eqref{order}.

In the remainder of this section, we restrict to real solutions of \eqref{zdot}.
It is instructive to review the interpretation of $z(t,x)$ (cf.~\cite{Ome13}). To this end, 
note that \eqref{OA}
implies 
\be\lbl{Poisson}
\rho(t,u,x)={ 1-|z(t,x)|^2\over 
2\pi\left( 1-2 |z(t,x)| \cos (u-\operatorname{Arg}~z(t,x) )
+ |z(t,x)|^2\right)}.
\ee
In particular, $z\equiv 0$ corresponds to the uniform density, while
values of $z$ close to $\pm 1$ indicate that the density is concentrated around
$0$ and $\pi$ respectively.

\subsection{The nonlocal equation}

Let $\mathcal{M}(0,1)$ be a space of measurable functions 
$z: (0,1]\to [-1,1]$.
In analogy to the discrete model \eqref{rKM}, we divide the equilibria of \eqref{zdot} into two classes:
$$
\tilde{\mathcal{E}}_1=\left\{ z\in \mathcal{M}(0,1):\; (|z(x)|=1,\; x\in (0,1])\& 
(\mathcal{R}[z]\neq 0) \right\},\quad
\tilde{\mathcal{E}}_2=\left\{ z\in \mathcal{M}(0,1):\; \mathcal{R}[z]= 0\right\}.
$$
\begin{thm}\lbl{thm.nonlocal}
Let $z_0\in \mathcal{M}(0,1)$ and $z_0\notin \tilde{\mathcal{E}}_1$.
Then the $\omega$--limit set of the trajectory of \eqref{zdot} starting at $z_0$,
$\mathbf{\omega}(z_0)\in \tilde{\mathcal{E}}_2$.
\end{thm}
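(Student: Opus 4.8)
\emph{Proof proposal.}
The plan is to adapt the Lyapunov-function argument from the proof of Theorem~\ref{thm.omega} to the nonlocal equation \eqref{zdot}. The analogue of $L$ in \eqref{def-L} is
\[
L[z]=\tfrac12\bigl(\mathcal R[z]\bigr)^2 ,
\]
the continuum version of the squared order parameter. Since every solution stays in $\mathcal M(0,1)$, i.e.\ $|z(t,x)|\le 1$, $L$ is bounded, and differentiating along a solution of \eqref{zdot} and using \eqref{corder},
\[
\frac{d}{dt}L[z(t,\cdot)]=\mathcal R[z]\int_I y^{-\alpha}\,\partial_t z(t,y)\,dy
=\frac{\bigl(\mathcal R[z]\bigr)^2}{2}\int_I y^{-2\alpha}\bigl(z(t,y)^2-1\bigr)\,dy\le 0 ,
\]
because $z(t,y)^2\le 1$. (If $\alpha\ge 1/2$ one first replaces $W$ by the truncated kernel $W_C$ of Section~\ref{sec.mean}, making $y^{-2\alpha}$ integrable; this changes none of the conclusions.) Hence $L[z(t,\cdot)]$ is nonincreasing and converges to some $c\ge 0$ as $t\to\infty$.

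I would then exploit two structural features of \eqref{zdot}. First, for each fixed $x$ the quantity $1-z(t,x)^2$ solves the \emph{linear} ODE $\partial_t(1-z^2)=x^{-\alpha}z(t,x)\,\mathcal R[z(t,\cdot)]\,(1-z^2)$, so that
\[
1-z(t,x)^2=\bigl(1-z_0(x)^2\bigr)\exp\!\Bigl(\int_0^t x^{-\alpha}z(s,x)\,\mathcal R[z(s,\cdot)]\,ds\Bigr);
\]
in particular the ``coherent set'' $\{x:|z(t,x)|=1\}$ is independent of $t$. Second, $r(t):=\mathcal R[z(t,\cdot)]$ satisfies $\dot r=-\tfrac12\,r\int_I y^{-2\alpha}(1-z^2)\,dy$, so if $r(0)\ne 0$ then $r(t)$ never vanishes and keeps a fixed sign; consequently $\partial_t z(t,x)$ has a fixed sign for a.e.\ $x$, so $t\mapsto z(t,x)$ is monotone and converges pointwise --- and, by dominated convergence, in $L^p$ --- to a limit $z_\infty$. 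Since $\partial_t z(t,x)\to\tfrac{1}{2x^\alpha}\bigl(z_\infty(x)^2-1\bigr)\mathcal R[z_\infty]$ must vanish for a.e.\ $x$, either $|z_\infty|=1$ a.e.\ or $\mathcal R[z_\infty]=0$, so in either case $\omega(z_0)=\{z_\infty\}\subset\tilde{\mathcal E}_1\cup\tilde{\mathcal E}_2$. (When $r(0)=0$, $z_0$ is already an equilibrium in $\tilde{\mathcal E}_2$; one could also reach $\tilde{\mathcal E}_1\cup\tilde{\mathcal E}_2$ via the Barbashin--Krasovskii--LaSalle principle \cite{BarKra52, Kras-Stability, LaSalle60}, exactly as in Theorem~\ref{thm.omega}.)

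It then remains to exclude the case $z_\infty\in\tilde{\mathcal E}_1$ when $z_0\notin\tilde{\mathcal E}_1$ --- equivalently, to show $\mathcal R[z(t,\cdot)]\to 0$, since then $c=\lim L[z(t,\cdot)]=0$ and $z_\infty\in\tilde{\mathcal E}_2$. So assume $r(0)\ne 0$; since $z_0\notin\tilde{\mathcal E}_1$ the set $S=\{x:|z_0(x)|<1\}$ has positive measure, and by the invariance above $|z(t,x)|<1$ for all $t$ and a.e.\ $x\in S$. To prove $r(t)\to 0$ I would pass to the strictly increasing time variable $\tau(t)=\int_0^t|r(s)|\,ds$, under which \eqref{zdot} decouples into the autonomous scalar equations $\frac{dz}{d\tau}=\operatorname{sgn}(r)\,\frac{z^2-1}{2x^\alpha}$. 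These integrate in closed form, and feeding the result back through $r=\mathcal R[z]$ yields a single autonomous equation $\dot\tau=\mathcal R[z_0]-\tfrac12\int_0^\tau Q(\sigma)\,d\sigma$ with an explicitly computable $Q\ge 0$; then $\mathcal R[z(t,\cdot)]\to 0$ is equivalent to this equation approaching its equilibrium only as $t\to\infty$, which in turn follows once one verifies an inequality of the form $\int_0^\infty Q(\sigma)\,d\sigma\ge 2\,|\mathcal R[z_0]|$. After the substitution this reduces to a comparison between the weight $\int_S y^{-\alpha}\,dy$ carried by the incoherent part of $z_0$ and the weight of its coherent part. I expect this last estimate --- the one place where the power-law profile $y^{-\alpha}$ enters quantitatively --- to be the main obstacle; everything leading up to it is a routine transcription of the discrete argument for Theorem~\ref{thm.omega}.
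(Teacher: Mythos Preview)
Your time change $\tau(t)=\int_0^t |r(s)|\,ds$, which decouples \eqref{zdot} into the family $\partial_\tau z=\pm\tfrac{1}{2x^\alpha}(z^2-1)$ and admits a closed-form solution, is exactly the device the paper uses; the formula \eqref{separable} in the paper is your ``closed form.'' Your preliminary Lyapunov and monotonicity observations (sign-definiteness of $r$, pointwise monotone convergence of $z(t,\cdot)$, invariance of $\{|z|=1\}$) are correct, but the paper does not need them and proceeds more directly.

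Where the two arguments diverge is the finish, which you leave open. You recast $\mathcal R(t)\to 0$ as an integral inequality $\int_0^\infty Q(\sigma)\,d\sigma\ge 2|\mathcal R[z_0]|$ and predict that it will reduce to a quantitative comparison between $\int_S y^{-\alpha}\,dy$ and the $y^{-\alpha}$--weight of the coherent part of $z_0$; you then stop, flagging this as the main obstacle. The paper does not attempt any such weight comparison. Instead, starting from the identity you already wrote down, $\dot r=\tfrac12\,r\int_I y^{-2\alpha}(z^2-1)\,dy$, it estimates in the original time variable (for $r>0$)
\[
\dot{\mathcal R}
\;\le\;\tfrac12\,\mathcal R\int_I y^{-2\alpha}(z-1)\,dy
\;\le\;\tfrac12\,\mathcal R\int_I y^{-\alpha}(z-1)\,dy
\;=\;\tfrac12\bigl(\mathcal R-(1-\alpha)^{-1}\bigr)\mathcal R,
\]
using $y^{-2\alpha}\ge y^{-\alpha}$ on $I$ and $z-1\le 0$ for the second step, and then concludes $\mathcal R(t)\searrow 0$ by the scalar comparison principle, since $\mathcal R(0)\le\int_I y^{-\alpha}\,dy=(1-\alpha)^{-1}$. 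No balance of coherent versus incoherent mass enters; a logistic-type upper solution absorbs the whole argument. So your ``main obstacle'' is an artifact of the route you chose: replacing your $Q$--inequality program by this one-line differential inequality closes the proof along the paper's lines.
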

\begin{proof}
Suppose $\mathcal{R}(0):=\mathcal{R}[z(0,\cdot)] >0$. Changing the time variable to
$t=\phi(\tau)$ subject to
\be\lbl{ch-time}
\phi^\prime(\tau)= {1\over \int_I y^{-\alpha} \tilde z(\tau,y) dy},\quad
\tilde z(\tau,y):=z(\phi(\tau),x),\quad \phi(0)=0,
\ee
we reduce \eqref{zdot} to
\be\lbl{tilde-z}
{\p\over \p\tau} \tilde z= {1\over 2x^\alpha} \left(\tilde z^2 -1\right).
\ee
The last equation is integrated explicitly
\be\lbl{separable}
\tilde z(\tau, x)={1-C(x)\exp\{{\tau\over  x^\alpha}\}\over 
1+C(x)\exp\{{\tau\over x^\alpha}\}},\quad
C(x)={1-z_0(x)\over 1+z_0(x)}.
\ee
Clearly, $\tilde z(\tau,x)\searrow -1$ for $x\in I$ as $\tau\to\infty$. 
Thus, there is $0<\tau^\ast<\infty$ such that
$\tilde{\mathcal{R}}(\tau^\ast):=\tilde{\mathcal{R}}[\tilde z(\tau^\ast,\cdot)]=0$ and 
$\tilde{\mathcal{R}}(\tau)>0$ for $\tau\in [0,\tau^\ast)$.

The change of time \eqref{ch-time} is well defined for $\tau\in [0,\tau^\ast).$ In terms of the original
time, we have the description of the system's dynamics on the time interval $[0,t^\ast)$,
with
\be\lbl{t-star}
t^\ast=\lim_{\tau\to\tau^\ast-0} \int_0^\tau (\tilde{\mathcal{R}} (s))^{-1} ds.
\ee

Denote $
\mathcal{R}(t)=\mathcal{R}[z(t,\cdot)].
$
If $t^\ast<\infty$ then $\mathcal{R}(t)=\tilde{\mathcal{R}}(\tau^\ast)=0$ for $t\ge t^\ast$.
Otherwise,
multiplying both sides of \eqref{zdot} by $x^{-\alpha}$ and integrating over
$I$, we have
\be\lbl{Rdot}
\begin{split}
\mathcal{R}^\prime& =2^{-1} \int_I x^{-2\alpha} (z^2(x,t)-1) dx \,\mathcal{R}\\
&\le 2^{-1} \int_I x^{-2\alpha} (z(x,t)-1) dx \, \mathcal{R}\\
&\le 2^{-1} \int_I x^{-\alpha} (z(x,t)-1) dx \,\mathcal{R}\\
& \le  \left(\mathcal{R} - (1-\alpha)^{-1}\right) \mathcal{R} .
\end{split}
\ee
By the comparison principle (cf.~\cite[Theorem~I.4.1]{Hartman-ODEs}),
from \eqref{Rdot} we conclude that $\mathcal{R}(t)\searrow 0$, as $t\to\infty$.

The case $\mathcal{R}[z(0,\cdot)]< 0$ is analyzed similarly.
\end{proof}

\subsection{Attractors of the repulsively coupled model}
Theorem~\ref{thm.nonlocal} shows that solutions of the mean field equation 
of the form \eqref{OA} approach an equilibrium
from the set $\{\mathcal{R}=0\}$. To illustrate possible patterns generated in this
scenario, we consider the 
IVP for \eqref{zdot} with the following initial conditions:
\be\lbl{ic}
\begin{split}
z_{\delta,x_0}^{(step)} (x) &=\left\{\begin{array}{cc}
-1+\delta, & x\in I^-:=(0,x_0),\\
1-\delta, & x\in I^+:=[x_0, 1),
\end{array} 
\right.
\end{split}
\ee

For $0<\delta\ll 1$, $z_{\delta,x_0}^{(step)}$ is close to the equilibrium $z_{0,x_0}^{(step)}\in
\tilde{\mathcal{E}}_1$,
corresponding
to a phase locked solution, localized around
$\pi$ for $x\in I^-$ and around $0$ for $x\in I^+$.

Consider the IVP for \eqref{zdot} with initial condition
$z_{\delta,x_0}^{(step)}$ for $0<\delta< 1$. To this end, note that for 
$x^\ast=2^{-1\over 1-\alpha}\in (0,1),$ 
$$
\int_0^{x^\ast} y^{-\alpha} dy = - \int_{x^\ast}^1 y^{-\alpha}dy.
$$
\begin{figure}
\begin{center}
{\bf a}\hspace{0.1 cm}\includegraphics[height=1.8in,width=2.0in]{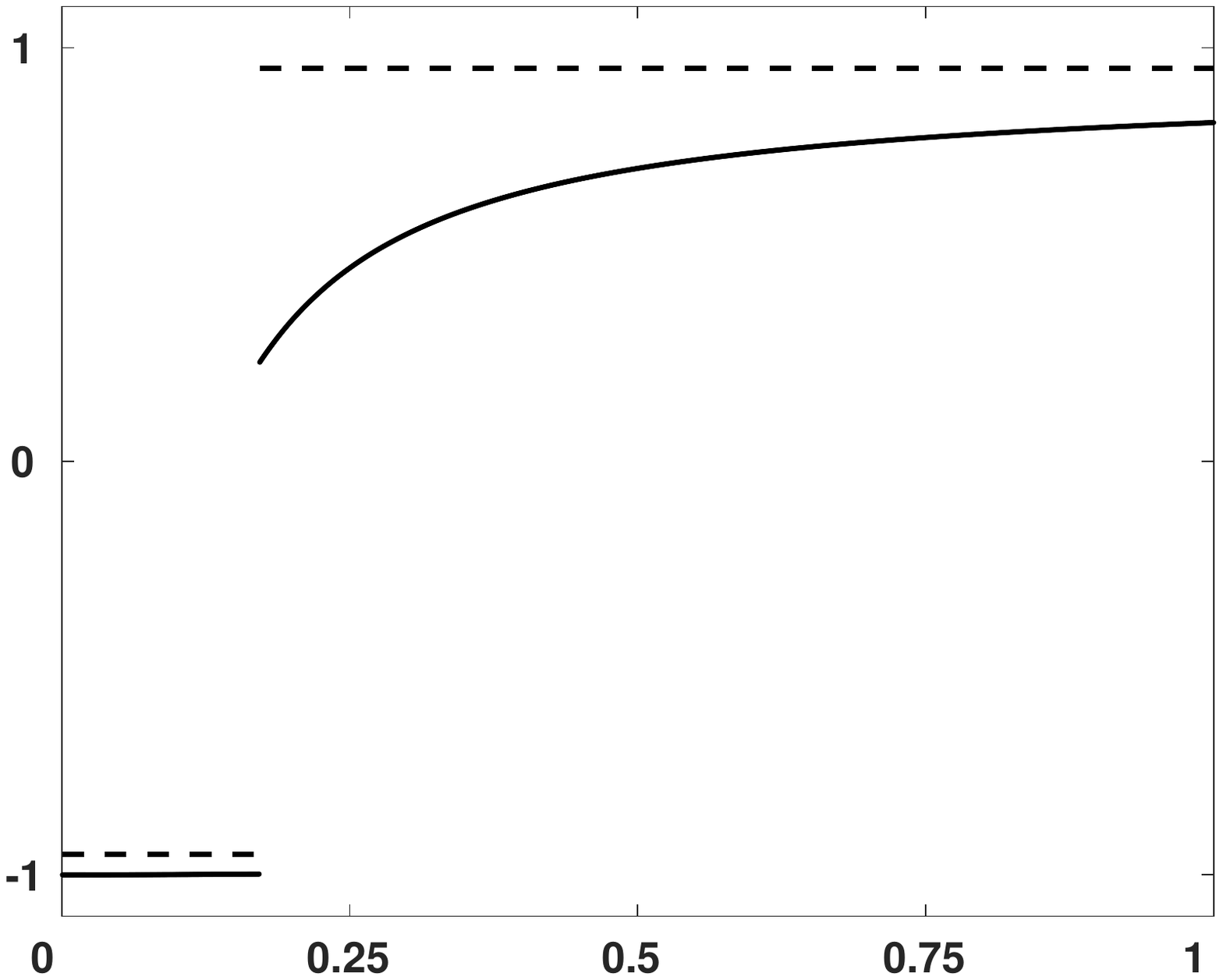}
{\bf b}\hspace{0.1 cm}\includegraphics[height=1.8in,width=2.0in]{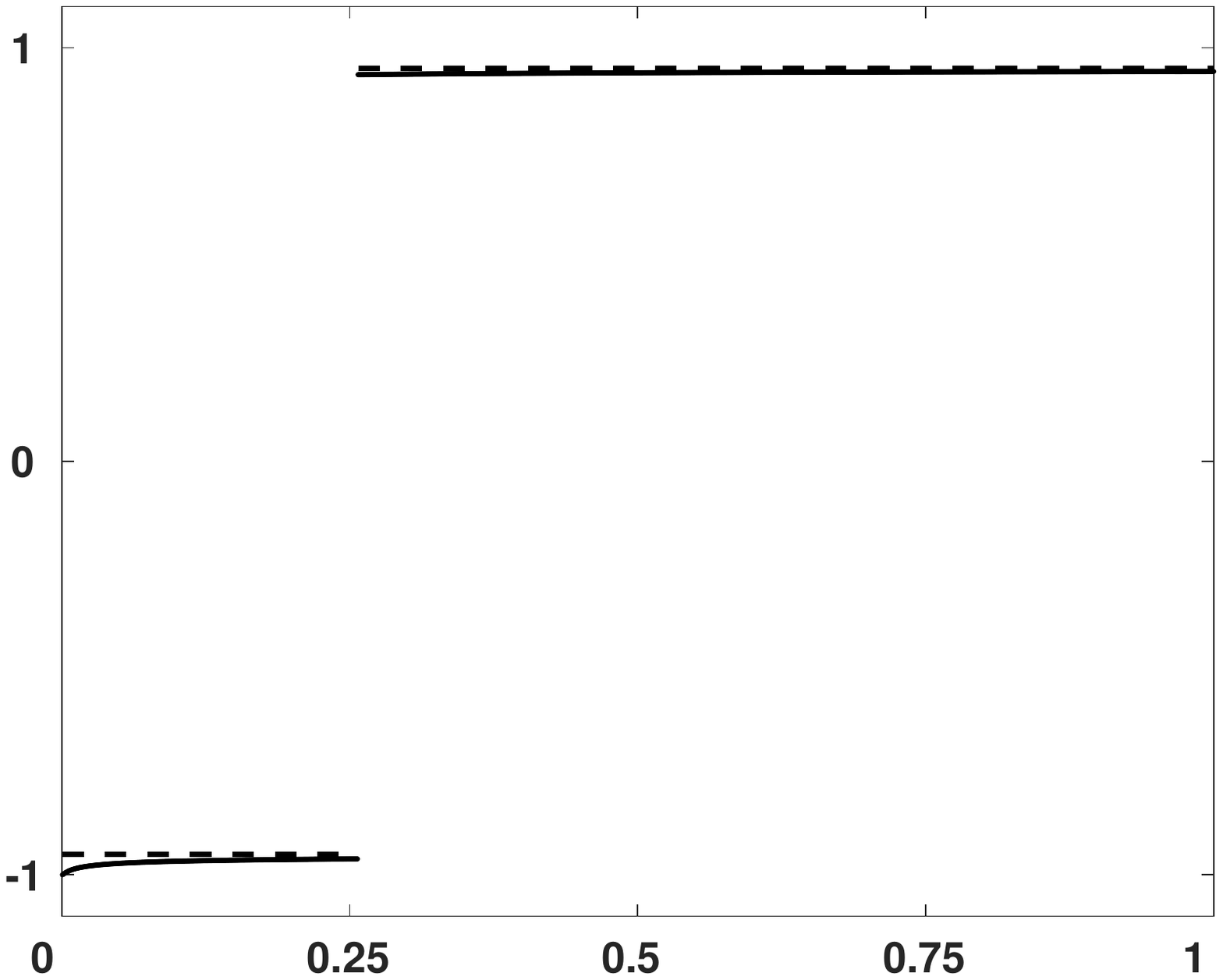}
{\bf c}\hspace{0.1 cm}\includegraphics[height=1.8in,width=2.0in]{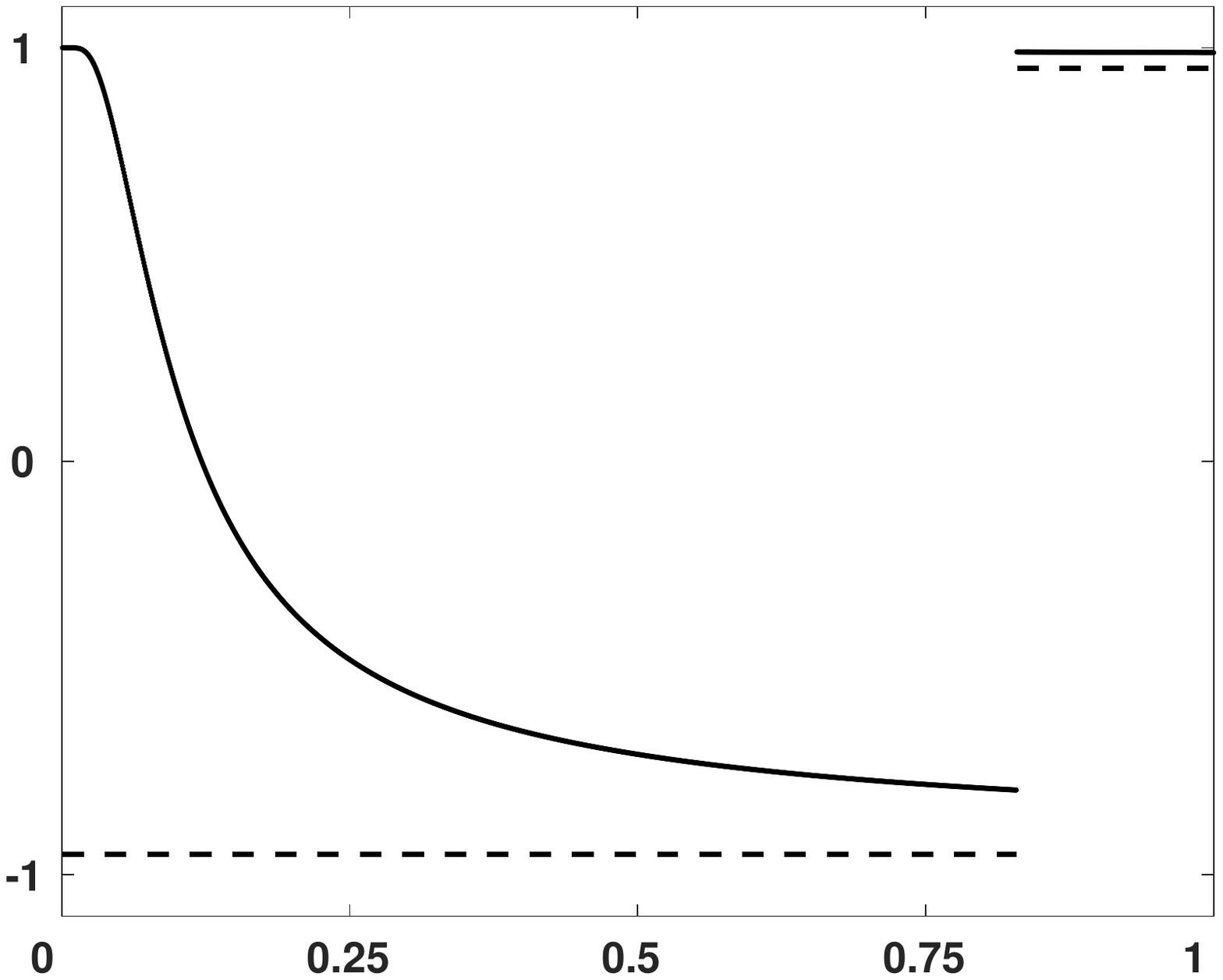}\\
{\bf d}\hspace{0.1 cm}\includegraphics[height=1.8in,width=2.0in]{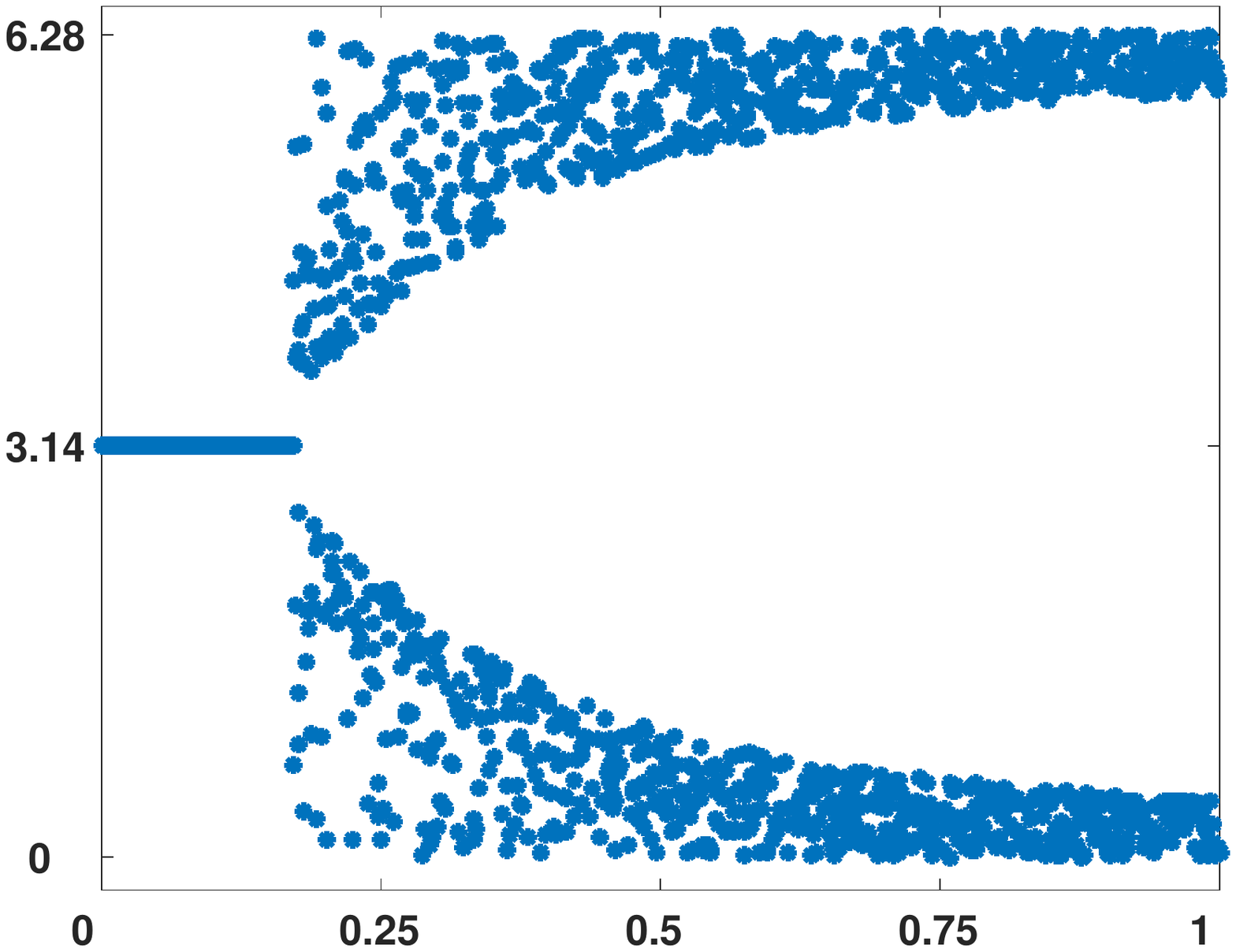}
{\bf e}\hspace{0.1 cm}\includegraphics[height=1.8in,width=2.0in]{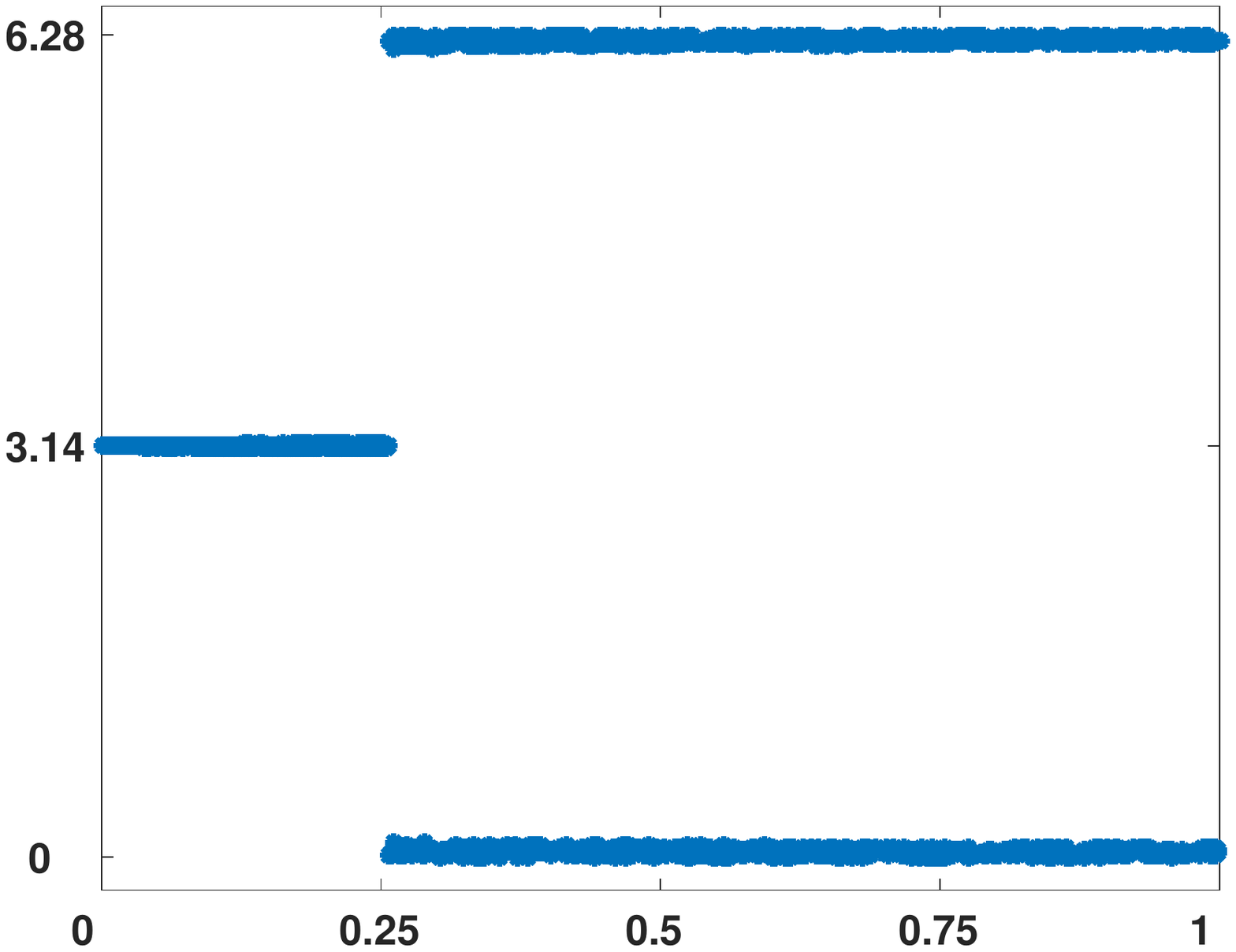}
{\bf f}\hspace{0.1 cm}\includegraphics[height=1.8in,width=2.0in]{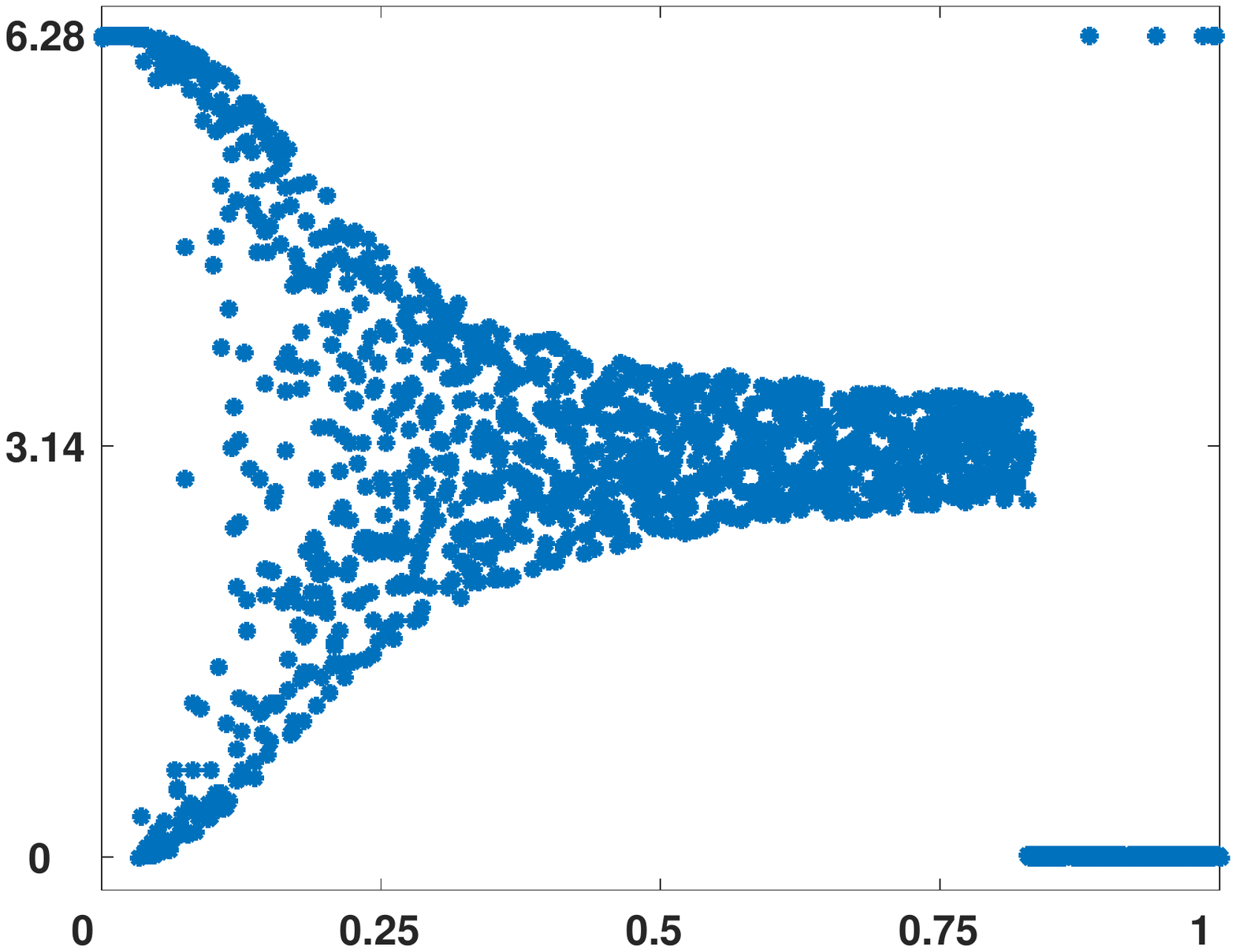}
\end{center}
\caption{The initial conditions (dashed line) and asymptotic states (solid line) 
for \eqref{zdot} (\textbf{a}-\textbf{c}) and the corresponding steady states of \eqref{rAKM}
(\textbf{d}-\textbf{f}). The patterns shown in (\textbf{d}) and (\textbf{f}) are the examples
of chimera states.
In (\textbf{d}) the oscillators in the left region ($I^-$) are localized around $\pi$ and
are spread-out around $0$ in the right region $I^+$. Similarly, the 
pattern shown in (\textbf{f}) features localized distribution around $0$ and
the spread-out one around $\pi$.
}
\lbl{f.1}
\end{figure}

Suppose first that $0<x_0<x^\ast$. Then $\mathcal{R}(0)>0$. By Theorem~\ref{thm.nonlocal},
$\mathcal{R}(t)\searrow 0$.
Furthermore, $|z(t,x)|\le 1$ and, thus, $z(t,x)$ is monotonically decreasing in time for 
every $x\in (0,1]$. In particular,
\be\lbl{left-I}
-1\le z(t,x)\le -1+\delta, \quad x\in I^-=(0,x_0),\; t\ge 0.
\ee
This means that in $I^-$ the oscillators remain localized around $\pi$ (in the moving frame of coordinates)
(cf.~\eqref{Poisson}),
provided $0<\delta\ll 1$ (Fig.~\ref{f.1}d). On the other hand, in $I^+$, $z(t,\cdot)$ is monotonically decreasing 
to its asymptotic state $z_\infty$ at which $\mathcal{R}(z_\infty)=0$  (see Fig.~\ref{f.1}\textbf{a}). 
In $I^+$, there must be
an interval over which $z$ is positive and strictly less than $1-\delta$ for all times. Denote such interval 
$\tilde I^+\subset I^+$. Thus, over $\tilde I^+$, $z(t,x)$ is bounded away from $\pm 1$ by a distance
greater than $\delta$ uniformly 
in time. Thus, the oscillators over $\tilde I^+$ exhibit a greater degree of 
incoherence.
The asymptotic state $z_\infty$ contains both the region of 
coherent dynamics ($I^-$) and that of incoherent ($I^+$) (Fig.~\ref{f.1}\textbf{a}).
Thus, $z_\infty$ corresponds to a chimera state. This is clearly seen in numerics  
(see Fig.~\ref{f.1}\textbf{d}). 
In the next section, for the modified model we will present tight estimates characterizing the asymptotic
state $z_\infty$. 
  
Next, we comment on the transformation of the asymptotic state $z_\infty$ as $x_0$ 
is increasing past $x^\ast$. The case of $x>x^\ast$ presents a symmetric scenario. 
In this case, $\mathcal{R}(0)<0$
and both $\mathcal{R}(t)$ and $z(t,\cdot)$ are monotonically increasing (see Fig.~\ref{f.1}\textbf{c}).  
In particular, $1-\delta\le z(t,x)\le 1$
in $I^+$ for $t\ge 0$, and the oscillators are localized around $0$ in $I^+$, while
exhibiting incoherent behavior in $I^-$ (see Fig.~\ref{f.1}\textbf{c},\textbf{f}). 
When $x_0$ is close to $x^\ast$, $\mathcal{R}(0)$ is close to zero, and $|\mathcal{R}(t)|$ 
remains small for all times. This means that the initial pattern does not change much
in the process of evolution, and $z_\infty$ remains close to the step function
$$
z_{0, x_0}^{(step)}(x) :=\pm 1, \; x\in I^\pm.
$$ 
The equilibrium $z_{0, x_0}^{(step)} \in \tilde{\mathcal{E}}_1$ is unstable but lies close a stable equilibrium 
$z_{step, x^\ast}\in\tilde{\mathcal{E}}_2$ (see Fig.~\ref{f.1}\textbf{b},\textbf{e}).

\subsection{Chimera states in the modified KM}
We now turn to a modification of the KM on power law graphs, for which we derive tight estimates
for the chimera states. If instead of scaling the coupling term by $n\rho_n$, as in \eqref{KM},
we scale it by the expected degree of node $i$:
$$
d_{ni}=\E_\omega \operatorname{deg}_{\Gamma_n}(i)=\sum_{j=1}^n \bar W_{nij},
$$
the repulsively coupled KM and the corresponding averaged equation 
take the following form:
\be\lbl{mKM}
\dot u_{ni} ={1\over d_{ni}} \sum_{j=1}^n \xi_{nij}(\omega) \sin (u_{ni}-u_{nj}),\quad i\in [n],
\ee
and 
\be\lbl{amKM}
\dot v_{ni} ={1\over n} \sum_{j=1}^n  x_{nj}^{-\alpha} \sin (v_{ni}-v_{nj}),\quad i\in [n],
\ee
respectively.
The mean field equation then becomes (cf.~\cite[Example 2.5]{KVMed17})
\be\lbl{mMF}
{\p\over \p t}\rho (t,u,x) + 
{\p\over \p u}\left\{\rho (t,u,x) \int_I\int_S (1-\alpha) y^{-\alpha} \sin (u-v) \rho (t,v,y) dvdy\right\}.
\ee
Applying the Ott-Antonsen Anzats to the model at hand, we arrive at
\be\lbl{mzdot}
\dot z={1-\alpha\over 2} \left(z^2-1\right) \mathcal{R}[z].
\ee
For \eqref{mzdot} subject to the initial condition \eqref{ic}, below we present tight bounds for the large time asymptotic state
$z_\infty$.
\begin{figure}
\begin{center}
{\bf a}\includegraphics[height=1.8in,width=2.0in]{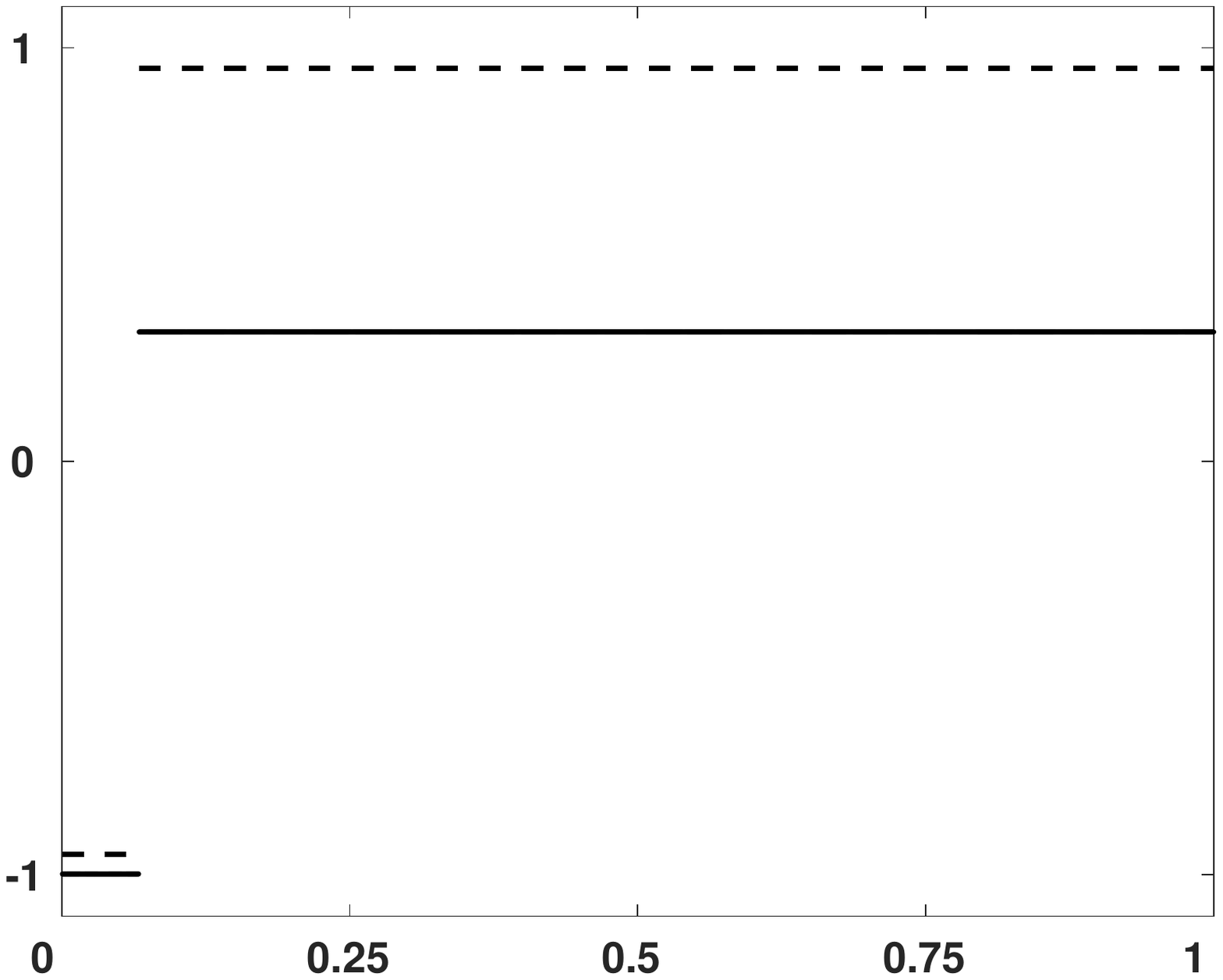}
{\bf b}\includegraphics[height=1.8in,width=2.0in]{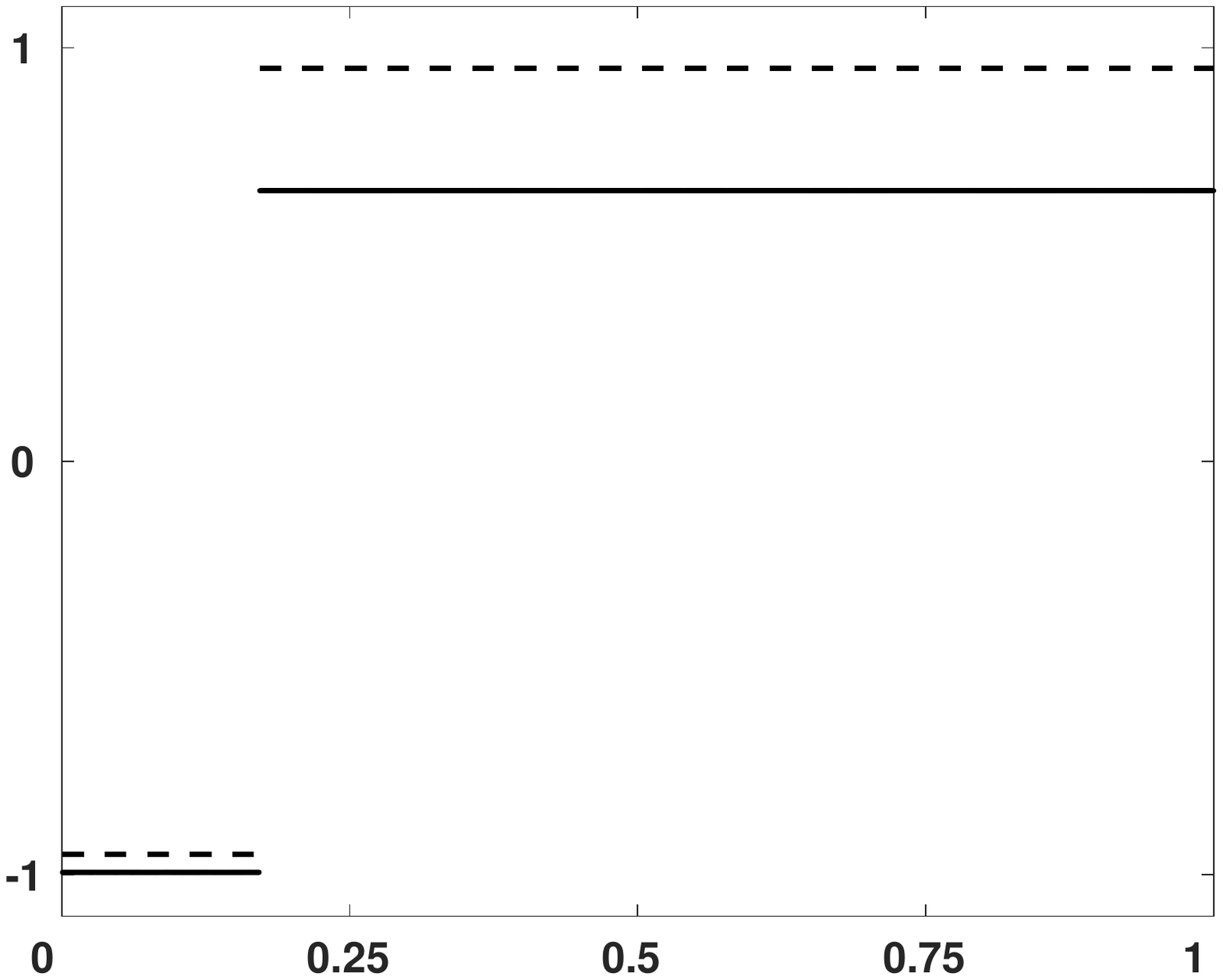}
{\bf c}\includegraphics[height=1.8in,width=2.0in]{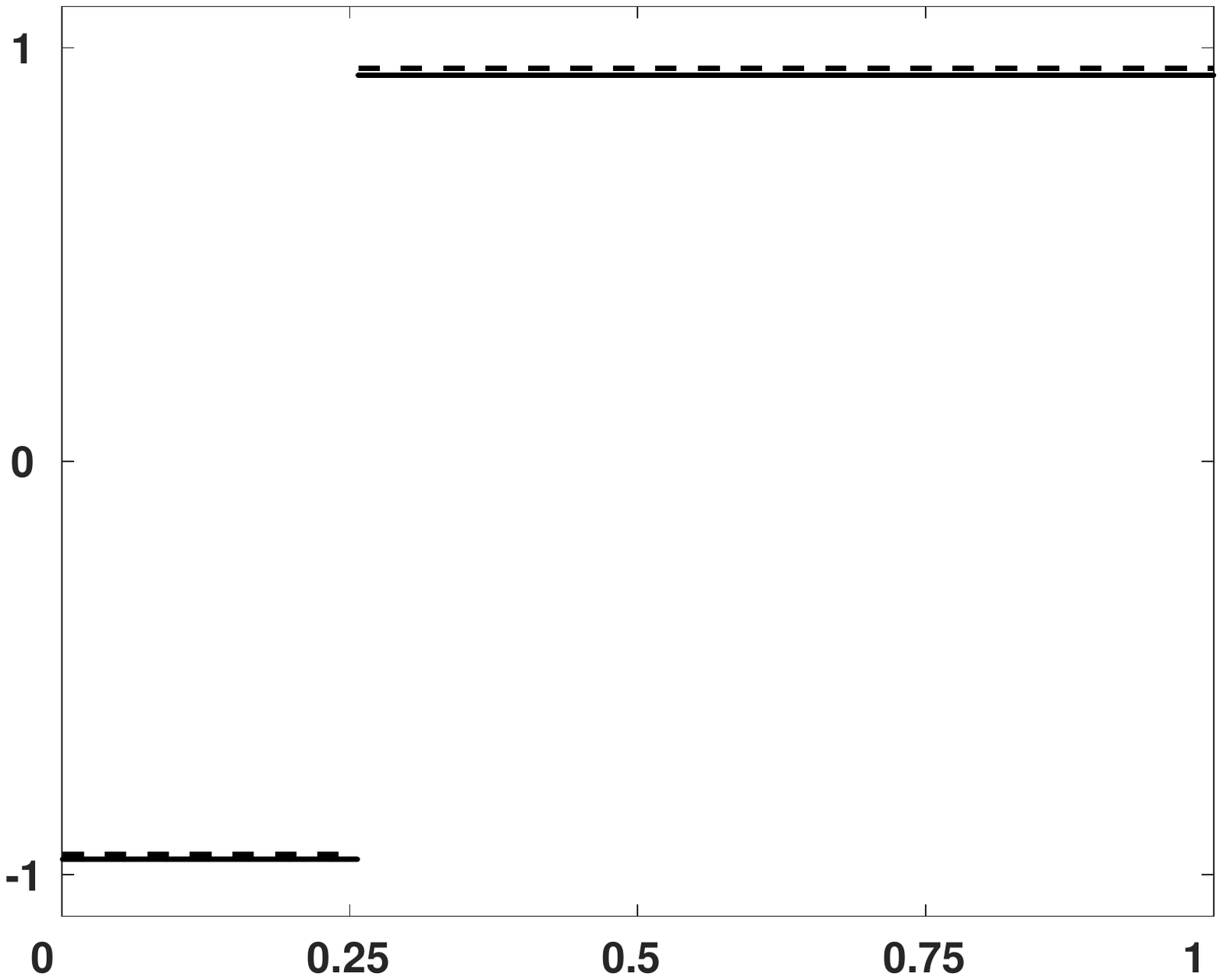}\\
{\bf d}\includegraphics[height=1.8in,width=2.0in]{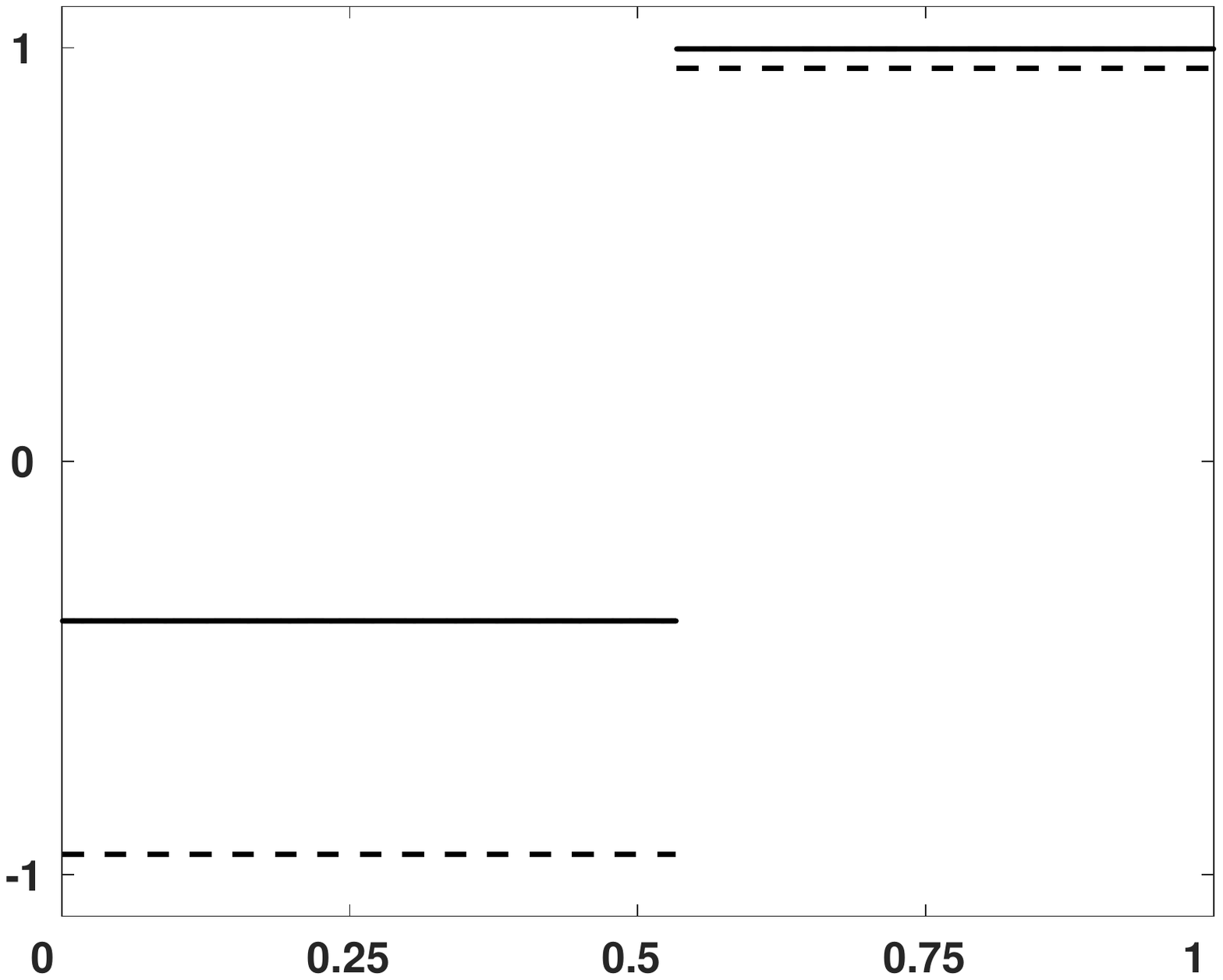}
{\bf e}\includegraphics[height=1.8in,width=2.0in]{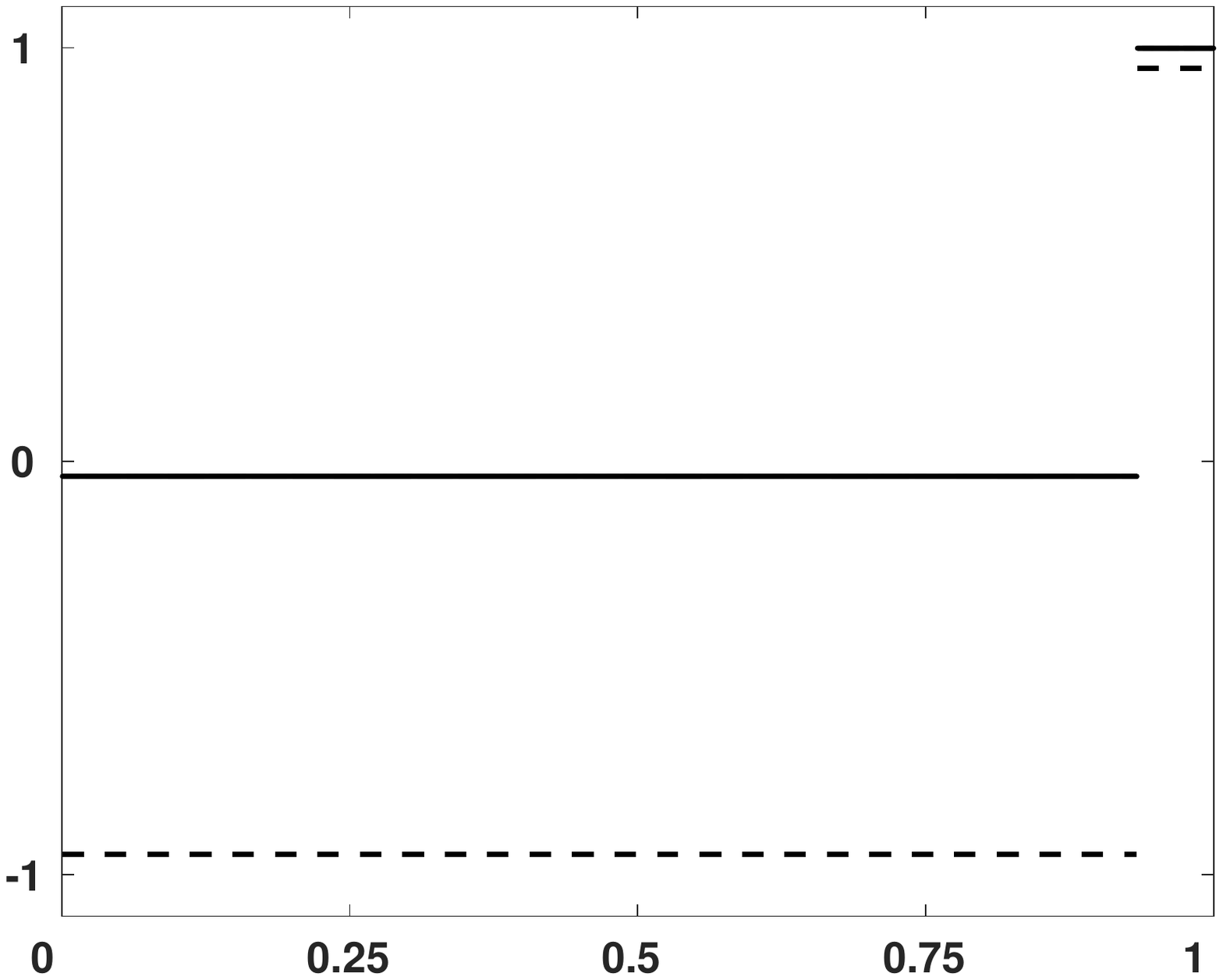}
\end{center}
\caption{The initial conditions (dashed line) and asymptotic states (solid line) 
for \eqref{mzdot}.
}
\lbl{f.2}
\end{figure}

Suppose $\mathcal{R}[z(0,\cdot)]>0$ and note that \eqref{mzdot} and \eqref{ic} imply 
$$
|z(x,t)|\le 1\quad \mbox{and}\quad  \mathcal{R}[z(t,\cdot)]\ge 0,
$$
for any $x\in I$ and $t\ge 0$.
Furthermore, $z(t,x)$ is monotonically decreasing.

On the other hand, from \eqref{mzdot} we have 
\be\lbl{bound-dotz}
(1-\alpha)(z-1) \mathcal{R}[z(t,\cdot)]\le {\p\over \p t} z\le {1-\alpha \over 2} (z-1) \mathcal{R}[z(t,\cdot)].
\ee
Multiplying all sides sides of the double inequality \eqref{bound-dotz} by $x^{-\alpha}$ and integrating over
$I$, we have
\be\lbl{bound-R}
\left((1-\alpha) \mathcal{R}[z(t,\cdot)] -1\right)\le {\p\over \p t} \mathcal{R}[z(t,\cdot)]\le 
2^{-1}\left( (1-\alpha)\mathcal{R}[z(t,\cdot)] -1\right).
\ee
\begin{figure}
\begin{center}
{\bf a}\includegraphics[height=1.8in,width=2.0in]{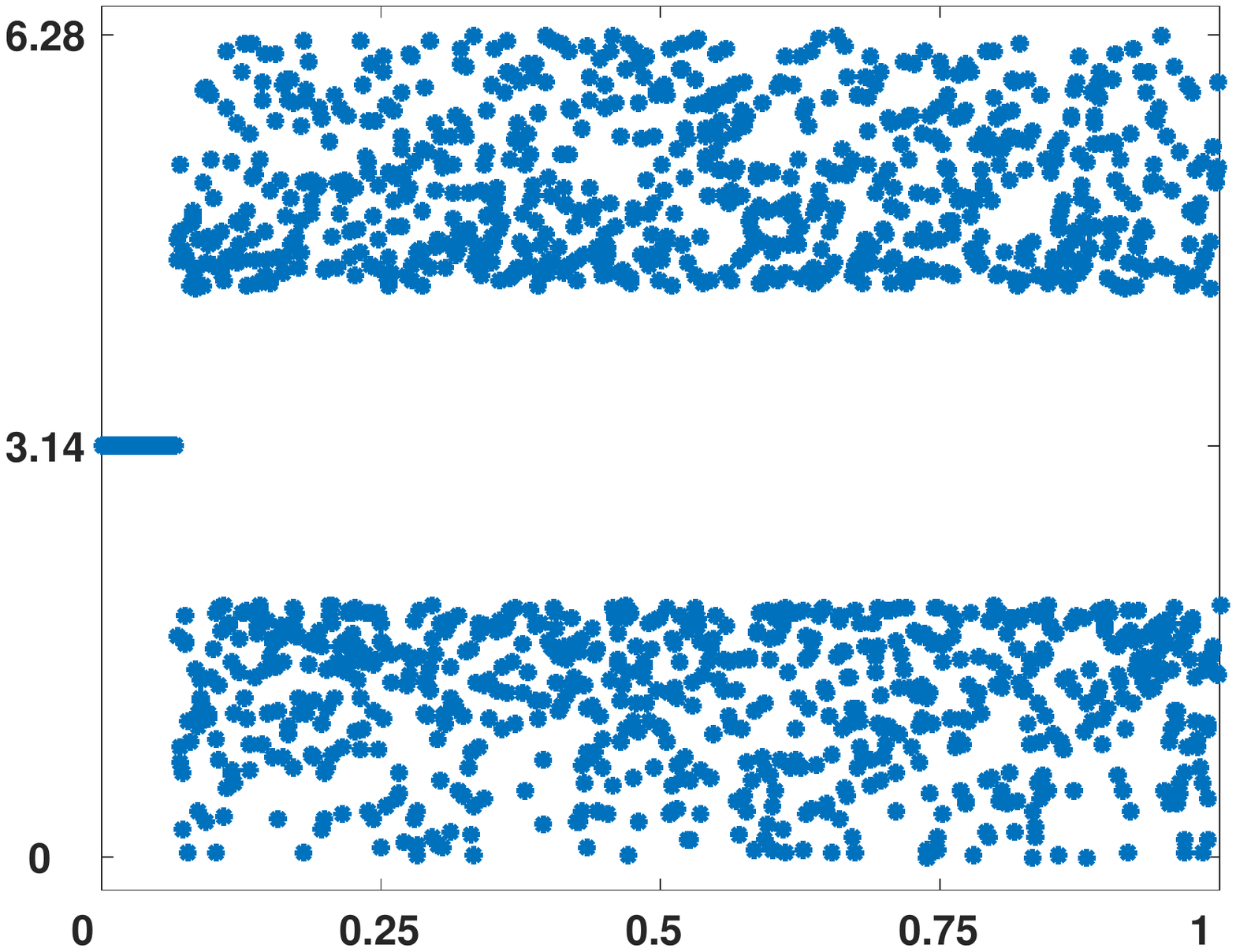}
{\bf b}\includegraphics[height=1.8in,width=2.0in]{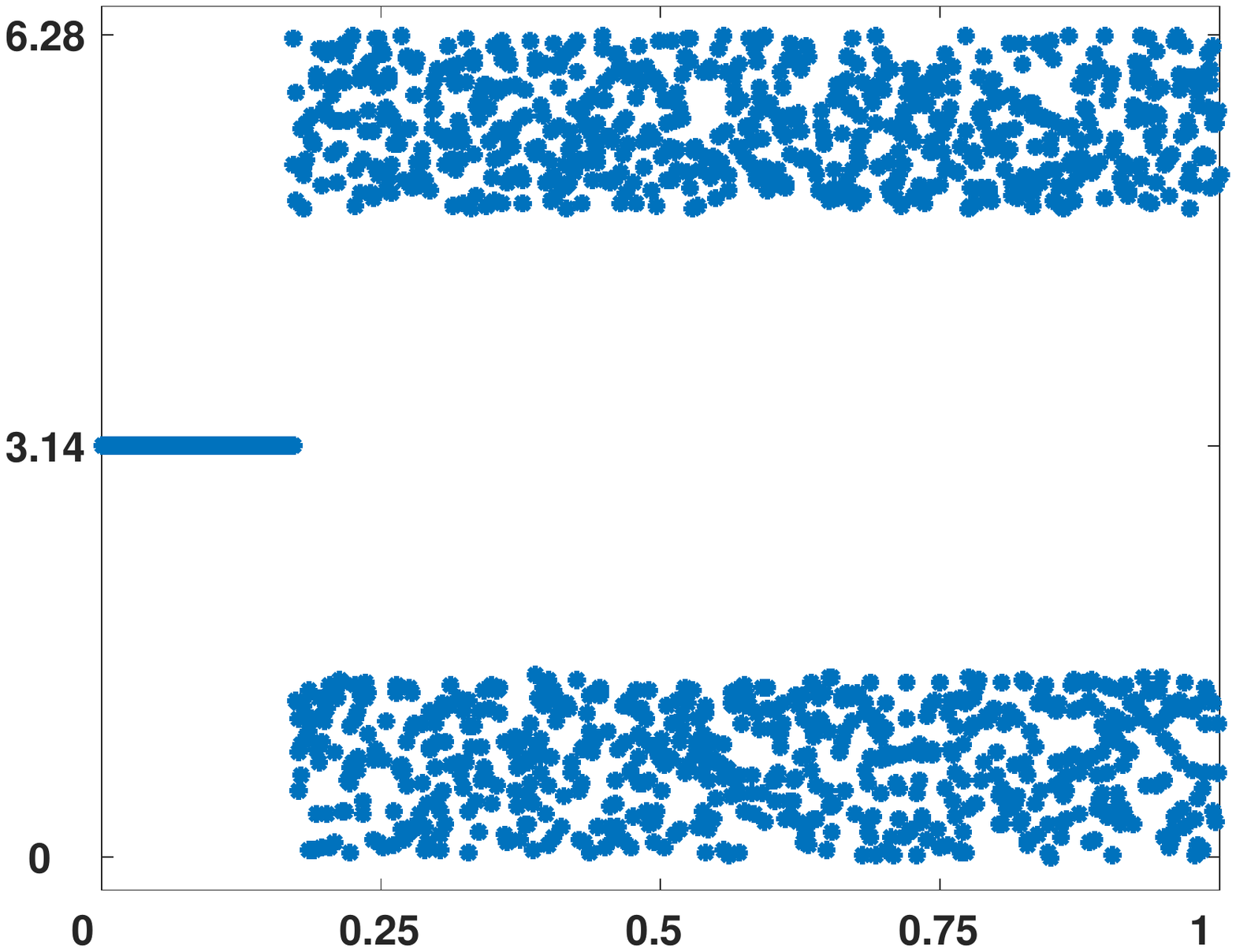}
{\bf c}\includegraphics[height=1.8in,width=2.0in]{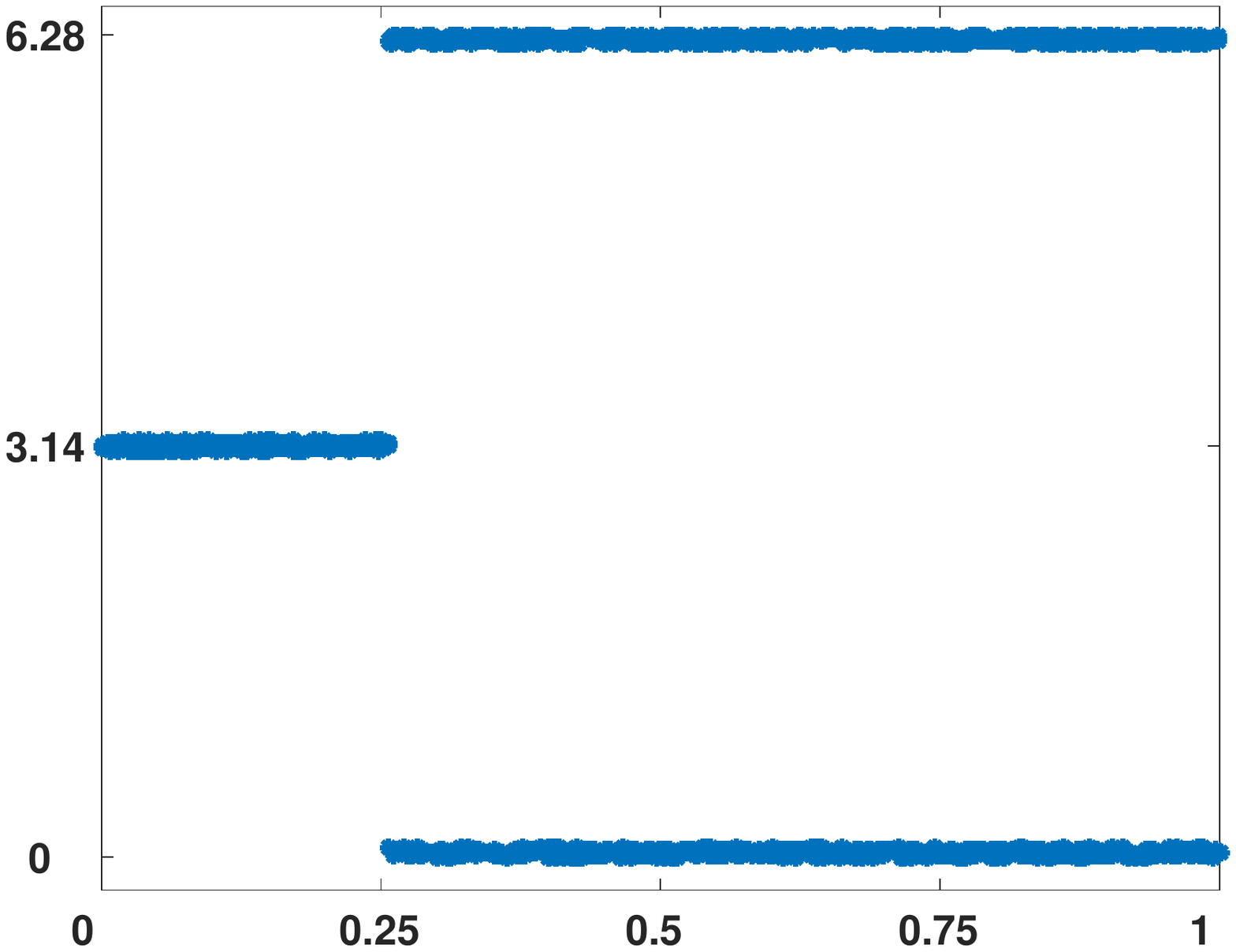}\\
{\bf d}\includegraphics[height=1.8in,width=2.0in]{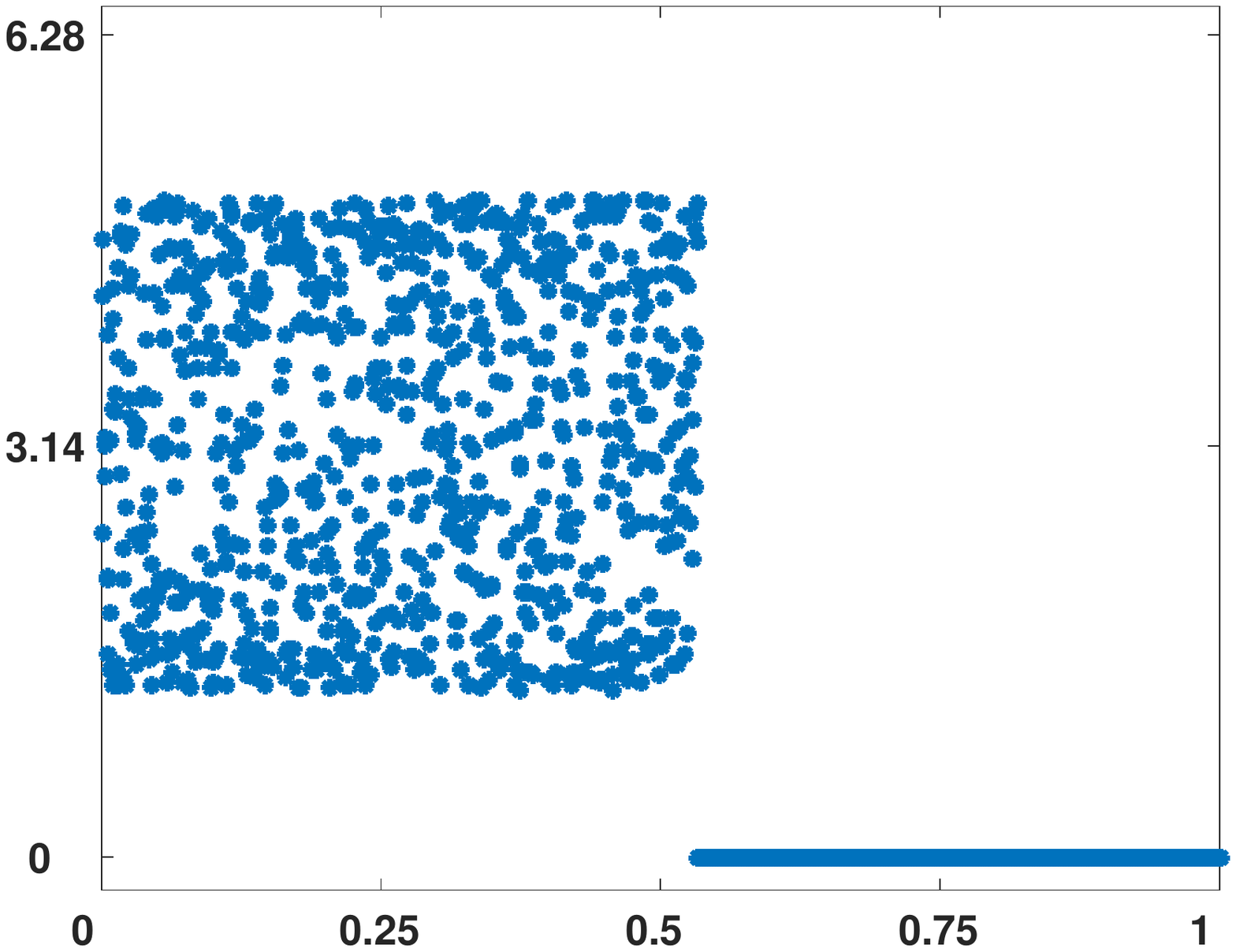}
{\bf e}\includegraphics[height=1.8in,width=2.0in]{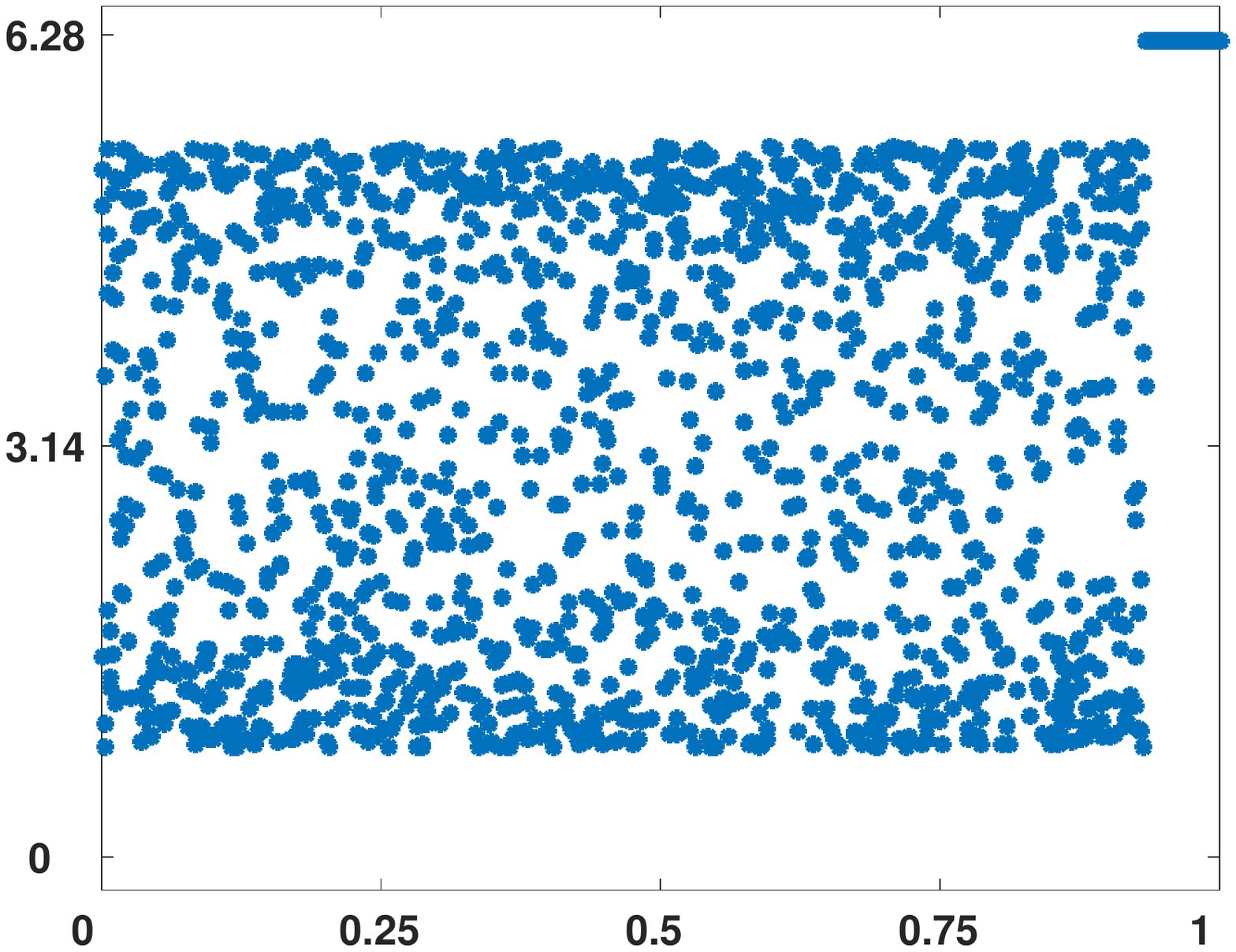}
\end{center}
\caption{The asymptotic states of \eqref{mKM}. The patterns in (\textbf{a}-\textbf{d})
correspond to the solutions of \eqref{mzdot} shown in the corresponding plots of
Fig.~\ref{f.2}. 
}
\lbl{f.3}
\end{figure}

Recalling $\mathcal{R}[z(0,\cdot)]>0,$ 
$\mathcal{R}[z(t,\cdot)]\searrow 0$ as $t\to\infty$, i.e., $z(t,\cdot)$ approaches an equilibrium from
$\tilde{\mathcal{E}}_2$. Next, we characterize the limiting state of the system. Since the initial condition is 
constant over each of the intervals $I^\pm$, so is the solution 
\be\lbl{zpm}
\begin{split}
z(x,t) &\equiv  z^-(t), \quad x\in I^-,\\
z(x,t) &\equiv  z^+(t), \quad x\in I^+,
\end{split}
\ee
Since $z^-(0)=-1+\delta$ and $z^-(t)\ge -1$, we have 
\be\lbl{tight-bound}
\left| z^-(t)+1\right| \le \delta,
\ee
i.e., the solution of the repulsively coupled KM \eqref{rKM} remains approximately synchronized over $I^-$.

Denote
$$
z^{\pm}_\infty := \lim_{t\to\infty} z^\pm (t).
$$
Further, since $\mathcal{R}[z(t,\cdot)]\to 0$ as $t\to\infty$,
we have
$$
z^+_\infty \int_{x_0}^1 y^{-\alpha} dy =-z^-_\infty \int^{x_0}_0 y^{-\alpha} dy.
$$
and
\be\lbl{balance}
z_\infty^+ = -z_\infty^- {x_0^{1-\alpha}\over 1-x_0^{1-\alpha}}.
\ee
The combination of \eqref{tight-bound} and \eqref{balance} yields
\be\lbl{zinfty-bound}
(1-\delta) {x_0^{1-\alpha}\over 1-x_0^{1-\alpha}}\le z_\infty^+\le {x_0^{1-\alpha}\over 1-x_0^{1-\alpha}}.
\ee
This double inequality combined with \eqref{tight-bound} yields tight estimates for the 
asymptotic state $z_\infty$ in $I^+$. Estimates \eqref{tight-bound} and \eqref{zinfty-bound}
characterize the asymptotic states for initial conditions $\mathcal{R}[z(0,\cdot)]>0$
(Fig.~\ref{f.2}\textbf{a},\textbf{b}). The complementary case $\mathcal{R}[z(0,\cdot)]>0$ is
analyzed similarly. 

\section{Discussion}\lbl{sec.discuss}
\setcounter{equation}{0}

The results of this study highlight the effects of the scale free connectivity for the 
dynamics of large networks. We found that  the synchronizability of the KM on sparse power law graphs 
is at least as good as it is on dense graphs. Moreover, the synchronization
threshold can be made arbitrarily low by varying the exponent of the power law 
degree distribution (cf.~\eqref{critical-pwl}). The imprint of the power law distribution is 
clearly seen in the stable chimera-like patterns generated by the repulsively coupled model
(Fig.~\ref{f.1} \textbf{d},\textbf{f}). Patterns shown in Fig.~\ref{f.1} and \ref{f.3} demonstrate a 
remarkable
ability of the attractors of the network to ``remember'' the initial condition on a continuous 
scale. Note that by continuously varying the parameter $x_0$ in the initial condition
$z^{(step)}_{\delta,x0}$ (i.e., by varying the the distribution of the positions of oscillators at time 
$0$), we are effectively changing the asymptotic state $z_\infty$ (Fig.~\ref{f.1} \textbf{a}-\textbf{c})
and, therefore, the asymptotic distribution of the oscillators (Fig.~\ref{f.1} \textbf{d}-\textbf{f}).
The memory of the initial conditions, which for the model at hand can be understood by studying the
reduced equation \eqref{zdot}, appears to be a common feature of nonlocally coupled networks
(see also \S 6.2 in \cite{Med14a}). In computational neuroscience, there has been a search 
for mechanisms  implementing continuous attractors in network models.
The models proposed, as a rule, in this context suffer from the lack of structural stability, i.e., the desired 
continuous attractor can be destroyed by small perturbations of parameters (see, e.g., \cite{Seu98}).
On the other hand, networks of nonlocally coupled oscillators, like the one presented in this
paper paper,  provide a robust mechanism for dependence of the attractor on initial conditions
on the continuous scale. Finally, the repulsively coupled KM on power law graphs \eqref{rAKM}
and its modification \eqref{mKM} provide a new simple mechanism for generating chimera 
states.

In conclusion, we note that the analysis in Section~\ref{sec.repulsive} is done for the averaged equation,
which approximates the original model on a random graph on finite time intervals. Therefore, the 
results for the averaged model reported in this section may hold for the original model only 
transitively. We believe that these results nonetheless give valuable insights into the dynamics
of coupled systems on power law graphs, and the asymptotic states of such systems should be 
investigated further.

\vskip 0.2cm
\noindent
{\bf Acknowledgements:}
This work was supported in part by the NSF DMS 1412066 (GM).

 \vfill\newpage
\bibliographystyle{amsplain}
\def\cprime{$'$} \def\cprime{$'$} \def\cprime{$'$}
\providecommand{\bysame}{\leavevmode\hbox to3em{\hrulefill}\thinspace}
\providecommand{\MR}{\relax\ifhmode\unskip\space\fi MR }
\providecommand{\MRhref}[2]{%
  \href{http://www.ams.org/mathscinet-getitem?mr=#1}{#2}
}
\providecommand{\href}[2]{#2}

\end{document}